\renewcommand\footnotetextcopyrightpermission[1]{}
\definecolor{codegreen}{rgb}{0,0.6,0}
\definecolor{codegray}{rgb}{0.5,0.5,0.5}
\definecolor{codepurple}{rgb}{0.58,0,0.82}
\definecolor{backcolour}{rgb}{0.95,0.95,0.92}
\lstdefinestyle{mystyle}{
    backgroundcolor=\color{backcolour},   
    commentstyle=\color{codegreen},
    keywordstyle=\color{magenta},
    numberstyle=\tiny\color{codegray},
    stringstyle=\color{codepurple},
    basicstyle=\ttfamily\footnotesize,
    breakatwhitespace=false,         
    breaklines=true,                 
    captionpos=b,                    
    keepspaces=true,                 
    numbers=left,                    
    numbersep=5pt,                  
    showspaces=false,                
    showstringspaces=false,
    showtabs=false,                  
    tabsize=2
}
\theoremstyle{definition}
\newtheorem{theorem}{Theorem}[section]
\newtheorem{definition}{Definition}[section]
\newcommand{\sys}{\textsc{SEED}\xspace}
\newcommand{\dm}{data curation\xspace}
\newcommand{\Dm}{Data curation\xspace}
\newlist{compactitem}{itemize}{3}
\setlist[compactitem]{label=\textbullet, nosep, leftmargin=0cm,itemindent=.5cm,listparindent=\parindent}
\begin{document}
\pagestyle{plain}
\title{\sys: Domain-Specific Data Curation With Large Language Models}

\author{Zui Chen$^{*}$, Lei Cao$^{+}$$^{*}$, Sam Madden$^{*}$, Tim Kraska$^{*}$, Zeyuan Shang$^{*}$, Ju Fan$^{\ddagger}$, Nan Tang$^{\dagger}$, Zihui Gu$^{\ddagger}$, Chunwei Liu$^{*}$, Michael Cafarella$^{*}$}
\affiliation{
  \institution{MIT$^{*}$, U of Arizona$^{+}$, HKUST (GZ)$^{\dagger}$, Renmin University$^{\ddagger}$}
}
\email{{chenz429,zeyuans,chunwei,kraska}@mit.edu, {lcao,madden,michjc}@csail.mit.edu, {fanj,guzh}@ruc.edu.cn, nantang@hkust-gz.edu.cn}

\begin{abstract}
\Dm tasks that prepare data for analytics are critical for turning data into actionable insights. However, due to the diverse requirements of applications in different domains, generic off-the-shelf tools are typically insufficient. As a result, data scientists often have to develop {\it domain-specific solutions} tailored to both the dataset and the task, e.g. writing domain-specific code or training machine learning models on a sufficient number of annotated examples. This process is notoriously difficult and time-consuming.
We present \sys, an {\it LLM}-{\it as}-{\it compiler} approach that automatically generates domain-specific \dm solutions via Large Language Models (LLMs). Once the user describes a task, input data, and expected output, the \sys compiler produces a hybrid pipeline that combines LLM querying with more cost-effective alternatives, such as vector-based caching, LLM-generated code, and small models trained on LLM-annotated data. \sys features an optimizer that automatically selects from the four LLM-assisted modules and forms a hybrid execution pipeline that best fits the task at hand.
To validate this new, revolutionary approach, we conducted experiments on $9$ datasets spanning over $5$ \dm tasks. In comparison to solutions that use the LLM on every data record, \sys achieves state-of-the-art or comparable few-shot performance, while significantly reducing the number of LLM calls. 
\end{abstract}

\maketitle

\settopmatter{printacmref=false}
\renewcommand\footnotetextcopyrightpermission[1]{}

\section{Introduction}
\label{sec:intro}
\begin{sloppypar}

\Dm tasks~\cite{data-tamer,bigdc} that discover, extract, transform, clean, and integrate data are critical for a wide variety of organizations. Despite significant efforts from the data management community, many sources still report that data scientists still spend over 80\% of their time on these tasks~\cite{DBLP:journals/pvldb/RezigCSSTMOTE19}. A key reason for this is that applications in different domains have diverse requirements, with no one-size-fits-all solution existing even for single \dm tasks. 
For example, for the task of data extraction, extracting monetary amounts can be effectively done by a regular expression such as that searches for a dollar sign followed by digits separated by commas and periods, i.e., $\texttt{"\$\textbackslash d[\textbackslash d|,|.]*"}$, while extracting human names requires a totally different method such as searching for capitalized words near salutations like ``Mr.'' or ``Ms.''.
Because of cases like this, generic off-the-shelf tools are rarely sufficient.
Instead, data scientists often have to develop application-specific solutions that are tailored to both the dataset and the problem domain, such as {\it domain-specific code} (like the regex above) or {\it machine learning models} trained on a large number of annotated examples to perform these types of tasks.
As a result, devising a \dm solution for a particular scenario is time-consuming, with multiple rounds of requirement generation, training data collection, model/algorithm development, and testing with both data scientists and domain experts, and rarely reusable from one deployment to the next.
This can be quite costly for enterprises -- for example, Citadel employs over 50 data management experts to deliver high-quality cleaned data to their analysts -- costing them tens of millions each year.

\subsection{Our Approach: \sys}
In this work, we propose \sys, an {\it LLM}-{\it as}-{\it compiler} approach which, allows users to describe a \dm task via a natural language specification, along with an input data. \sys automatically compiles this specification into an {\it instance-optimized} solution tailored for the data and application at hand.

The key insight is that LLMs -- with their impressive ability to generate code, perform reasoning, understand the semantics of data, and encode common knowledge -- will lead to a paradigm shift in \dm research and make it possible to automatically construct \dm solutions {\it on the fly}. Indeed, prior work has shown that LLMs can be remarkably effective at addressing specific \dm tasks~\cite{fm,evaporate,DBLP:journals/corr/abs-2306-08891}.
Unlike these prior works, which rely directly on LLMs for processing every record in a \dm task, \sys instead aims to use LLMs to generate domain-specific {\it modules} for different \dm tasks, some of which may involve direct invocation of LLMs and some of which are LLM-generated but do not use the LLM once they have been produced.

Specifically, the \sys compiler generates an execution pipeline composed of {\it code}, {\it small machine learning models}, as well as direct invocations of the LLM itself on (some) individual data records. In this execution pipeline, modules use LLMs in a variety of ways. For example, code is synthesized by the LLM to provide a domain-specific solution (e.g., a regular expression for extracting monetary amounts) and small models are trained on labels generated by the LLM. If these modules are not confident about the results on some records, \sys will forward them to the {\it LLM module}, which, although expensive, is often able to perform complex, human-like reasoning tasks on data items. For each request, the LLM module may further employ tools that retrieve relevant information from a database or other user-supplied data to assist the LLM in solving the task. Here, \sys leverages the {\it reasoning ability} of LLMs to determine on a case-by-case basis what additional information and tools will be helpful in solving the specific task.

In this way, \sys leverages LLMs' synthesis, reasoning, semantics understanding abilities as well as the encoded common knowledge to construct a domain-specific solution. 
Ideally, users do not need to manually code modules or annotate a large number of training examples. 
Moreover, unlike prior work on using LLMs for \dm tasks, \sys does not require expensive LLM invocations on every data record, which suffers from scalability, efficiency, and cost issues when handling large datasets.

\sys consists of three key components: {\bf modules}, {\bf infrastructure}, and an {\bf optimizer}.

The {\bf modules} correspond to different types of physical operators which once linked together, offer a domain specific solution to each \dm task. In our current implementation, \sys supports four types of modules: {\it CodeGen}, {\it CacheReuse}, {\it ModelGen}, and {\it LLM}. The {\it CodeGen} module uses LLM-generated code to replace the LLM in processing the data. The {\it CacheReuse} module reuses the previous exact or similar LLM query results to directly answer the new queries. The {\it ModelGen} module distills an LLM to a small machine learning model using the LLM as an annotator. The {\it LLM} module directly invokes an LLM to produce an answer. It also supports RAG-style data access with \dm tools.

\sys's {\bf infrastructure} offers a set of generic functions serving all \dm tasks. It consists of a {\it scheduler}, {\it cache storage}, and some basic optimizations to make the modules more efficient and effective. The scheduler controls the dataflow, routes data to suitable modules, and aggregates the results of all modules to produce the final answer. The cache storage, a key element of \sys, caches the input queries and LLM query results and offers interfaces for the modules to efficiently leverage the cached data to directly answer queries, train small models, validate the generated code, and augment LLMs with a RAG style solution. It features some generic optimizations such as {\it query batching} that saves LLM cost by batching a bunch of LLM queries into one and {\it tools integration} that automatically integrates the tools supplied by users into the solution. 

The {\bf optimizer} constitutes the most interesting component of \sys. Similar to database optimizers, given a specification of a \dm task, \sys’s optimizer automatically decides what modules to use, configures the chosen modules, and orders these modules into an execution pipeline. The objective is to produce a \dm plan that minimizes the execution costs with guaranteed accuracy. Because the \sys optimizer has to take both the execution time and effectiveness into consideration, its search space is much larger than a classical database optimizer where the goal is to minimize the execution time. To efficiently produce a good plan, inspired by the classic cost-based optimizers, \sys collects some statistics and then iteratively constructs the plan by adding modules one by one and reusing the optimal subplans. In addition, we leverage the unique properties of the \sys optimization problem and \dm tasks to further reduce the search space, while not missing the good plans. Moreover, the \sys optimizer dynamically re-optimizes as \sys continuously accumulates more cached data and gradually improves the performance of the cache reuse and small model. In some cases, when \sys determines the small model's performance is good enough with the labelled data it has accumulated from the LLM at a certain point in the execution, the small model might replace the LLM completely.

\vspace{-2mm}
\subsection{Contributions}
Our primary contribution is to demonstrate that LLMs enable a new, transformative approach to tackling \dm tasks. Unlike conventional \dm tools, \sys does not aim to provide pre-configured \dm models for specific data domains. Instead, based on the properties of the given dataset as well as user-specific requirements, \sys compiles an {\it instance-optimized} solution for the given dataset {\it on the fly}. During this process, users do not have to write any code or perform prompt engineering.

We have built an initial version of \sys to validate the idea. The results are encouraging: with a number of basic optimizations in the infrastructure and optimizer, which we describe below, \sys is able to produce efficient and effective solutions for multiple \dm problems using different modules, demonstrating its potential in practice. In particular, through experiments on $9$ datasets spanning various \dm tasks including data imputation, data extraction, data annotation, entity resolution, and data discovery, we show that \sys produces solutions that significantly outperform their generic counterparts, often approaching the performance of solutions trained on thousands of examples. Moreover, in comparison to solutions that use LLMs on every data record, \sys achieves state-of-the-art or comparable few-shot performance, while significantly reducing the number of required LLM calls.

Our technical contributions, mainly focused on the \sys optimizer and effectively generating modules themselves, include:
\begin{compactitem}

\item {\textbf{\sys Optimizer.}}
Inspired by database query optimizers, we have built the \sys optimizer that, given any specific \dm task, automatically produces an execution plan with minimized cost and guaranteed effectiveness. We prove that the plan is optimal under reasonable assumption. To the best of our knowledge, \sys is the first database optimizer style solution that optimizes the use of the LLMs in solving large scale data problem.

\item {\textbf {\sys Modules.}} By leveraging the cutting edge techniques in the field as well as inventing new techniques, \sys uses LLMs to effectively synthesize the \sys modules. In particular, to support scenarios with complex logic, the {\it CodeGen} module in \sys produces an ensemble of code snippets that jointly solve a task. We design an {\it evolutionary algorithm} to automatically build, filter and aggregate the ensemble of code snippets.

\item {\textbf{Infrastructure.}}
\sys's infrastructure includes a scheduler that dynamically routes the data along the execution pipeline at instance level. It also offers optimizations such as {\it query batching} and {\it tools integration} that are generally applicable to many different \dm tasks.
\end{compactitem}

\end{sloppypar}
\section{\sys Overview}
\label{sec:system}
\begin{sloppypar}

\begin{figure*}[t]
  \centering
  \includegraphics[width=1.0\linewidth]{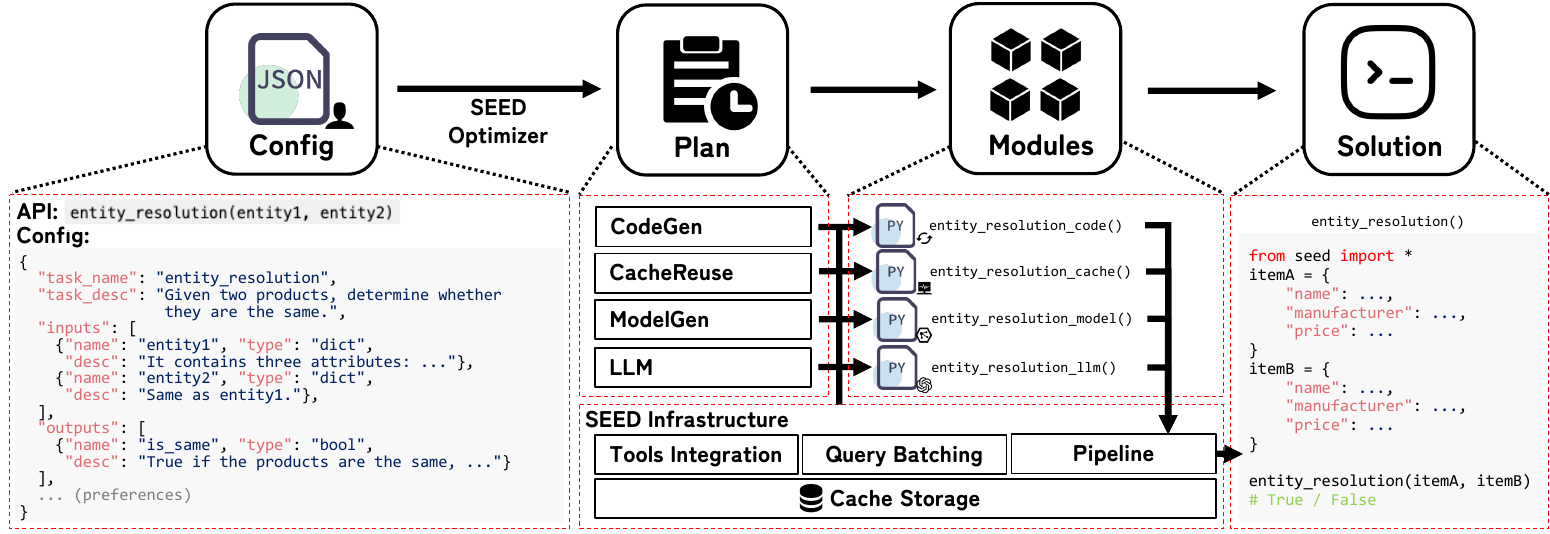}
  \vspace{-6mm}
  \caption{The Architecture of \sys, using an entity resolution task as example.}
  \label{fig:demo}
\vspace{-4mm}  
\end{figure*}

We first describe \sys from the user's perspective in Sec.~\ref{sec:system:use} and overview its internals in Sec.~\ref{sec:system:internals}.

\subsection{\sys Usage}
\label{sec:system:use}

\noindent\textbf{User Configuration.}
Users interact with \sys via a simple configuration file. Fig.~\ref{fig:demo} shows the configuration of an entity resolution task. Given a set of products with attributes {\it{``name''}}, {\it{``manufacturer''}}, and {\it{``price''}},  the user seeks to determine whether two products refer to the same item. The configuration mainly involves two parts: (1) describing the task, inputs, and outputs in natural language and (2) (optionally) providing available data and tools. The user first specifies the task name, inputs and outputs. The name can be arbitrary text. Each input also includes its name, data type, and a natural language description, -- for example,  the description may incorporate the user's domain knowledge about the attribute. The user defines the output in the same way. Optionally, the user may provide additional domain knowledge. This can be done through natural language descriptions, custom tools in the form of APIs for \sys to use, or data files containing examples, either annotated or not.

For common \dm tasks including data extraction, data discovery, entity resolution, data imputation, and data discovery, \sys offers built-in configuration templates. 
These templates automatically populate the required fields in the configuration file, which can be adjusted by users as needed. In the above entity resolution task, the template populates the input field with two entities and the output field with the following information: {\it{``name: is\_same; type: bool; description: 1 if the two titles are the same, 0 otherwise.''}}. After selecting the entity resolution template, the user only needs to optionally provide the hint and some examples.

\noindent\textbf{Compilation \& Execution.}
After the user completes the configuration, \sys compiles it into a deployable program as a functional API conforming to the user's specification, e.g., for the example in Fig.~\ref{fig:demo}, one function, {\tt {entity\_resolution(entity1, entity2) $\rightarrow$ is\_same}}, is generated. This function ingests the data in the format of data record pairs and outputs a boolean value indicating if one pair matches or not. As \sys runs, it may periodically regenerate new implementations of this function as, for example, new training data annotated by LLMs becomes available for ModelGen.

\subsection{\sys internals}
\label{sec:system:internals}

\sys compiles the solution in three steps: (1) using the \sys optimizer to produce a \dm plan that determines what modules to enable; (2) constructing the modules; (3) and linking the modules into an executable pipeline.

\noindent\textbf{(1) Producing a Data Curation Plan.}
For different tasks and datasets, the \sys optimizer produces different plans that make use of different combinations of modules in different orders. This is because different modules apply to different tasks. Blindly using all the modules on every task tends to be neither effective nor efficient.

As shown in Fig.~\ref{fig:demo}, \sys chooses and applies four modules to each task: CacheReuse, CodeGen, ModelGen, and raw LLM. 
The CodeGen module is suited to \dm tasks where most cases can be effectively addressed using rule-based methods, such as detecting errors in data that violate domain specific rules or extracting monetary amounts or human names as discussed in Sec.~\ref{sec:intro}.
The CacheReuse module performs well in the scenarios where similar or even identical instances frequently arise, while an LLM tends to generate similar or even identical results for these queries. Reusing the cached results thus avoids repeatedly invoking the LLM to answer these queries.
The ModelGen module is suitable for \dm tasks that can be modeled as predictive or generative machine learning tasks. For example, entity resolution can be viewed as a classification problem which classifies a pair of objects as matched or unmatched. Such problems can be well-addressed with a machine learning model, trained by examples generated by LLMs or supplied by users.
Finally, the LLM module, although universally applicable as advanced but expensive {\it data analysis tools}, is best suited for complex tasks that require semantic understanding ability.  Some tasks might use the LLM modules alone; for example, data discovery -- where reasoning whether a table is relevant to a user's question -- is an area where LLMs show a clear advantage over simpler methods.

Instead of leaving users to decide what modules to use, the \sys optimizer automatically produces a plan. Moreover, rather than always produce a plan with the highest possible accuracy regardless of the execution cost, \sys can trade-off accuracy to minimize execution cost by for example setting a performance gap tolerance of $10\%$, meaning that the user accepts a $10\%$ reduction in accuracy versus the maximum attainable for an increase in efficiency. Guided by this objective, the \sys optimizer produces a plan that specifies what modules to use, configures their hyperparameters, and decides the execution order.

\noindent\textbf{(2) Constructing the Modules.}
\sys uses LLMs to help construct the modules on the fly.

For plans that use the CodeGen module, \sys first translates the config file into a task description prompt using a pre-defined template. This prompt is then sent to an LLM to generate a series of code snippets, which are automatically evaluated and refined (Sec.~\ref{sec:modules-codegen}). This produces a callable method, e.g., {\tt entity\_resolution\_code()}.
Similarly, for plans with the {\it ModelGen} module, \sys uses prompt templates to request an LLM to annotate data for use in model training. Then \sys generates a callable method, e.g., {\tt entity\_resolution\_model()}, which corresponds to a small machine learning model trained on the LLM-generated training data (Sec.~\ref{sec:modules-model}).
For the {\it CacheReuse} module, given a new LLM query, \sys searches for its nearest neighbors in the cache and reuses the results if their similarity is above a threshold. 

For the {\it LLM} module, \sys generates a prompt that connects an LLM to the data access interfaces associated with the given task and composes an iterative process whereby the LLM can selectively use these interfaces to extract data. 
This also generates a callable method, e.g., {\tt entity\_resolution\_llm()}, which implements a RAG style \dm solution (Sec.~\ref{sec:infra-tools}).
 
Note although not the major focus of this paper, we have spent significant effort in \sys to carefully design templates for dynamic prompt composition that perform well in general on these types of \dm tasks, thus eliminating the need for prompt engineering by users. For the prompts templates please see Appendix~\ref{sec:appendix:prompts} of the extended version~\cite{seed}.

\noindent\textbf{(3) Linking and Execute the Modules.}
Eventually, \sys links all these internal method calls above to a final method, e.g., {\tt entity\_resolution(entity1, entity2)} {\tt$\rightarrow$ is\_same}. The generated method sequentially executes the modules in an order specified by the \dm plan over the data records. Each module determines for each record whether it should directly return the result or pass it to the next module. For example, for CodeGen, generated code snippets are explicitly instructed to refrain from answering when faced with uncertain cases, in which case execution will continue with the next module, while CacheReuse and ModelGen refrain from answering when they produce predictions with low confidence (Sec.~\ref{sec:modules-model}). The LLM module, which typically is the last step in the execution pipeline, always answers, using the available data access tools to retrieve relevant information from data to improve response accuracy.

\end{sloppypar}
\section{\sys Optimizer}
\label{sec:optim}
\begin{sloppypar}
Given a \dm task, the \sys optimizer automatically produces an optimized execution plan $\mathcal{P}$ which selects a set of modules $M_i$ to activate, configures their hyperparameters $\theta_i$, and decides their execution order in the pipeline. The objective is to minimize the overall execution cost $\mathcal{C}$ while yielding effectiveness $\mathcal{A} \in [0,1]$ that is close to that of the most effective plan. 

\begin{definition}[\textbf{\sys Optimization}]
\label{thm:seed-optim-def}
Given an effectiveness gap $G$ that the users can tolerate, the \sys optimizer targets finding an execution plan $\mathcal{P} = [\mathbb{M} (\mathbb{\theta}), O(\mathbb{M})]$, that among all possible plans, has the lowest cost $\mathcal{C}$ and has an effectiveness $\mathcal{A}(\mathcal{P})$ > $\mathcal{A}(\mathcal{P}^*) - G$. Here $\mathcal{P}^*$ denotes the plan with highest effectiveness $\mathcal{A}$; $\mathbb{M} (\mathbb{\theta})$ denotes the set of selected modules $M_i$ and their corresponding hyperparameters $\theta_i$; and $O(\mathbb{M})$ denotes the order of these modules.
\end{definition}

For example, if the user sets the effectiveness gap to $10\%$, the \sys optimizer will produce a plan with minimal execution costs subject to the constraint that it is at most 10\% worse than a plan with highest possible effectiveness. Tab.~\ref{tab:hyperparameters} shows the hyperparameters $\theta_i$ of each module we currently support.

The \sys optimization problem bears similarity to database query optimization at a high level, in that both choose the physical implementation of operators and decides their ordering. However, \sys optimization is different in several respects. First,  database query optimization targets minimizing the execution cost of relational queries, while \sys optimization has to take both execution cost and effectiveness into consideration, thus leading to a larger search space. This is because in relational databases, all valid query plans produce the same results, while in \sys, given a \dm task such as entity resolution, different plans that involve AI models might produce different results with different accuracies. Second, in \sys, the optimal \sys plan evolves over time, because the accuracy of modules like CacheReuse and ModelGen tend to keep improving as more results are produced and cached during execution time. Therefore, \sys requires a dynamic re-optimization approach to adapt the plan.

\begin{table}[htbp]
  \vspace{-2mm}
  \caption{\sys Module Hyperparameters}
  \vspace{-3mm}
  \label{tab:hyperparameters}
  \begin{tabularx}{0.7\linewidth}{ll}
    \toprule
    \multirow{2}{*}{ CacheReuse   }     & Activate                      \\
                                        & Distance Threshold            \\
    \midrule
    \multirow{3}{*}{ CodeGen }          & Activate                      \\
                                        & Num Branches                  \\
                                        & Num Preserved Branches        \\
                                        & Num Iterations                \\
    \midrule
    \multirow{2}{*}{ ModelGen   }       & Activate                      \\
                                        & Confidence Threshold          \\
    \midrule
    \multirow{2}{*}{ LLM     }          & Activate                      \\
                                        & Examples Sample Mode          \\
    \bottomrule
  \end{tabularx}
  \vspace{-4mm}
\end{table}

\subsection{Generic \sys Optimizer}
\label{sec:optim-general}

To generate an efficient and effective plan $\mathcal{P}$, inspired by the classic Selinger optimizer, the \sys optimizer uses dynamic programming to reduce the search space. That is, \sys adds modules one by one, keeping track of the selected subplans in a memoization table to avoid recomputing subplans. 
However, unlike the Selinger optimizer, which only has to store one lowest-cost subplan for each subquery, \sys has to decide what subplans to store based on both effectiveness and cost. Clearly, \sys cannot keep one single plan with either lowest cost and highest effectiveness. However, it cannot afford to store all plans  due to the memory and the search costs. To solve this problem, we propose a skyline-based method which stores a subplan  {\it if and only if} it potentially could be a part of the final full plan. More specifically, \sys will never keep a subplan unless it is a {\it skyline subplan}, where the concept of {\it skyline subplan} is defined in Def.~\ref{def.skyline}.

\begin{definition}[\textbf{Skyline Plan}]
\label{def.skyline}
A subplan $\mathcal{P}_i$ dominates the other subplan $\mathcal{P}_j$, if $\mathcal{C}(\mathcal{P}_i) < \mathcal{C}(\mathcal{P}_j)$ and $\mathcal{A}(P_i) > \mathcal{A}(\mathcal{P}_j)$. A subplan $\mathcal{P}_i$ is a skyline subplan if $\mathcal{P}_i$ is \textbf{NOT} dominated by any subplan $\mathcal{P}_j$.
\end{definition}

Intuitively, a subplan $P_i$ is a skyline plan if there does not exist any plan that is better than it on both effectiveness and cost. If a subplan is not a skyline plan, it will never be a part of the final plan. This is because replacing it with a skyline plan that dominates it will end up with an equal or better plan. 

\begin{algorithm}[t!]
    \setstretch{0.60}
    \caption{Generic SEED Optimizer}\label{alg:seed-optim}
	\LinesNumbered
    \smaller
	\KwIn{Task, Gap $G$}
	\KwOut{Configuration $\Theta$}
        $f[0][0] \leftarrow \mathcal{P}_{\varnothing}$; \\
        \For {$i \leftarrow 1$ to $+\infty$} {
            $f[i] \leftarrow f[i-1]$; \\
            \For{$\mathcal{P}^{i-1} \in f[i-1]$} {
                \For {$M \in modules$ and $M$ is not used in $\mathcal{P}^{i-1}$}{
                    \For{$\theta_M \in \texttt{Grid(M)}$}{
                        $\mathcal{P}^{i} \leftarrow \texttt{append}(\mathcal{P}^{i-1},\theta_M)$; \\
                        $f[i][\mathcal {A}(\mathcal{P}^{i})] \leftarrow \arg\min_{\mathcal{P} \in \lbrace f[i][\mathcal {A}(\mathcal{P}^{i})], \mathcal{P}^{i} \rbrace} \mathcal {C}(\mathcal{P})$; \\
                    }
                }
            }
            \For {$\mathcal{P} \in f[i]$} {
                $f[i][\mathcal {A}(\mathcal{P})] \leftarrow \arg\min_{\mathcal{P} \in \lbrace f[i][a] \vert a \ge \mathcal {A}(\mathcal{P}) \rbrace} \mathcal {C}(\mathcal{P})$; \\
            }
            \If {$f[i] = f[i-1]$} {
                break; \\
            }
        }
        $\mathcal{P}^\ast \leftarrow \arg\max_{\mathcal{P} \in f[i]} {\mathcal{A}}(\mathcal{P})$; \\
        return $\arg\min_{\mathcal{P} \in f[i]} {\mathcal{C}}(\mathcal{P})$ s.t.: ${\mathcal{A}}(\mathcal{P}^\ast) - {\mathcal{A}}(\mathcal{P}') \le G$; \\
\end{algorithm}

Next, we describe how the \sys optimizer works in more detail. As shown in Alg.~\ref{alg:seed-optim}, the \sys optimizer iteratively attempts to append a new module to the existing subplans (Alg.~\ref{alg:seed-optim}, L4$\sim$5), and then discards the dominated subplans (Alg.~\ref{alg:seed-optim}, L9$\sim$10). This process continues until no better plans can be found. Unlike exhaustive search, this approach avoids recomputing previous subplans and prunes invalid and dominated subplans as well as their descendants. Finally, \sys optimizer stores all the skyline plans after considering all the modules, and then by default returns the best plan $\mathcal{P}$, that optimizes $\min_\mathcal{P} {\mathcal{C}(\mathcal{P})}$ subject to: $\mathcal{A}(\mathcal{P}^\ast) - \mathcal{A}(\mathcal{P}) \le G$ (Alg.~\ref{alg:seed-optim}, L14). 

The \sys optimizer does not use a greedy algorithm that optimizes the configuration of each module independently. Instead, when attempting to add a new module $M_1$ into an existing subplan $\mathcal{P} = [\langle M_2(\theta_2), M_3(\theta_3) \rangle]$, the \sys optimizer will reuse this subplan including the parameter configurations of $M_2$ and $M_3$ and their relative order, but decide the configuration of $M_1$ based on the collective performance $\mathcal{A}(\mathcal{P}')$ of the pipeline composed of the modules in the newly formed plan $\mathcal{P}'$. For example, if $\mathcal{P}' = [\langle M_2(\theta_2), M_3 (\theta_3), M_1(\theta_1) \rangle]$, the effectiveness $\mathcal{A}(\mathcal{P}')$ is measured by executing the pipeline that applies $M_2$, $M_3$, and then $M_1$ on a validation dataset.

As shown in Fig.~\ref{fig:optim}, the \sys optimizer may start by choosing the LLM module, deciding its best hyperparameter configuration, and storing it as the current optimal subplan. Then the \sys optimizer attempts to append different modules to the single LLM module subplan. It finds that appending CodeGen or ModelGen in front of the LLM module is able to advance the Pareto frontier. Thus, both $\mathit{[\langle CodeGen, LLM\rangle]}$ and $\mathit{[\langle ModelGen, LLM\rangle]}$ are maintained as valid subplans. Then the \sys optimizer may continue to try plugging more modules. Finally, when the plans stop improving, the \sys optimizer returns the most efficient plan from all the valid plans (i.e., plans where $\mathcal{A}(\mathcal{P}^\ast) - \mathcal{A}(\mathcal{P}) \le G$) on the Pareto frontier.

Next, we show that as long as the descendants of a dominated subplan are also dominated during search, pruning the dominated subplans and their descendants does not break the optimal substructure property of dynamic programming and thus preserves the optimality of the \sys optimizer. 

\begin{theorem}
\label{thm:seed-optim-exact}
As long as the descendants of a dominated subplan are also dominated, the \sys optimizer is guaranteed to find the optimal solution $\mathcal{P}^\ast = \max_{\mathcal{P}}\ \mathcal{A}(\mathcal{P})$ as well as the most efficient solution $\min_{\mathcal{P}}\ \mathcal{C}(\mathcal{P})$ subject to: $\mathcal{A}(\mathcal{P}^\ast) - \mathcal{A}(\mathcal{P}) \le G$.
\end{theorem}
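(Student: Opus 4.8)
The plan is to read Algorithm~\ref{alg:seed-optim} as a dynamic program indexed by the number of appended modules and to prove its correctness through an optimal-substructure invariant on the memoization table $f[\cdot]$, in the spirit of the Selinger correctness argument but lifted to a two-objective (cost, effectiveness) skyline. The key quantity is $f[i]$, which I will show stores, after iteration $i$, a skyline of subplans that collectively \emph{dominate or match} every subplan buildable with at most $i$ modules. First I would dispatch the preliminaries. For \emph{termination}, note that each module is used at most once (L5) and each grid $\texttt{Grid}(M)$ is finite, so every plan has at most as many stages as there are available modules; after that many iterations no module can be appended, $f[i]=f[i-1]$, and the loop exits (L11--12). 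For \emph{skyline soundness}, observe that L8 and the pass in L9--10 delete a subplan only in favor of a retained one with cost no larger and effectiveness no smaller; hence $f[i]$ always keeps, for every subplan it has generated, a stored subplan that weakly dominates it.

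The core is the invariant, proved by induction on $i$: \emph{after iteration $i$, for every subplan $\mathcal{P}$ reachable by appending at most $i$ modules there exists $\mathcal{Q}\in f[i]$ with $\mathcal{C}(\mathcal{Q})\le\mathcal{C}(\mathcal{P})$ and $\mathcal{A}(\mathcal{Q})\ge\mathcal{A}(\mathcal{P})$.} The base case is immediate since $f[0]=\{\mathcal{P}_{\varnothing}\}$. Subplans of size below $i$ are handled by the carried-over entries from L3 together with skyline soundness; for a size-$i$ plan, write $\mathcal{P}=\texttt{append}(\mathcal{P}',\theta_M)$ with $\mathcal{P}'$ of size $i-1$, and let $\mathcal{Q}'\in f[i-1]$ be the dominator of $\mathcal{P}'$ furnished by the inductive hypothesis. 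If $\mathcal{Q}'=\mathcal{P}'$, then $\mathcal{P}'$ itself survived pruning and $M$ is unused in it, so the algorithm forms $\mathcal{P}$ at L6--7 and L8 installs a dominator of $\mathcal{P}$ into $f[i]$. The only other possibility is that $\mathcal{P}'$ was pruned, i.e., $\mathcal{P}'$ is weakly dominated by $\mathcal{Q}'\ne\mathcal{P}'$; since $\mathcal{P}$ is a descendant of $\mathcal{P}'$, the theorem's hypothesis that \emph{descendants of a dominated subplan are also dominated} guarantees $\mathcal{P}$ is dominated as well, and its dominator is represented in $f[i]$ by skyline soundness. This second branch is the crux, and it is exactly where the assumption is load-bearing: without it a dominated prefix could, after absorbing a module that the dominating prefix had already consumed, overtake every extension of the dominator, breaking optimal substructure and rendering the pruning unsafe.

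Granting the invariant at the terminal fixed point $i^{\ast}$, both conclusions follow cleanly. Let $\mathcal{P}^{\ast}=\arg\max_{\mathcal{P}}\mathcal{A}(\mathcal{P})$ be the globally most effective plan; its dominator $\mathcal{Q}\in f[i^{\ast}]$ has $\mathcal{A}(\mathcal{Q})\ge\mathcal{A}(\mathcal{P}^{\ast})$, and maximality forces equality, so L13 recovers $\mathcal{A}(\mathcal{P}^{\ast})$ exactly. For the constrained objective, let $\mathcal{P}_{o}$ minimize $\mathcal{C}$ subject to $\mathcal{A}(\mathcal{P}^{\ast})-\mathcal{A}(\mathcal{P}_{o})\le G$; its dominator $\mathcal{Q}\in f[i^{\ast}]$ satisfies $\mathcal{A}(\mathcal{Q})\ge\mathcal{A}(\mathcal{P}_{o})\ge\mathcal{A}(\mathcal{P}^{\ast})-G$ and $\mathcal{C}(\mathcal{Q})\le\mathcal{C}(\mathcal{P}_{o})$, so $\mathcal{Q}$ is feasible and no costlier than the optimum, whence the min-cost feasible plan returned at L14 is itself optimal.

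The step I expect to be the main obstacle is the inductive step's pruned-prefix branch. The honest reading is that the theorem promotes the optimal-substructure property to a hypothesis, so the real work is threading that hypothesis correctly through the Pareto recursion while respecting two wrinkles: the use-each-module-once constraint, which can prevent the dominating prefix from being extended by the very module the dominated prefix needs (this is precisely what the assumption neutralizes), and the distinction between weak and strict domination, which requires the tie case $\mathcal{Q}'\ne\mathcal{P}'$ with equal objectives to be folded into the ``dominated'' side so that the descendant-domination hypothesis still applies. I would state the assumption in its weak-domination form up front to keep these edge cases uniform.
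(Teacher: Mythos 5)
Your proof is correct, but it takes a genuinely different route from the paper's. The paper argues backward, by contradiction: it supposes the optimizer eliminated $\mathcal{P}^\ast$, traces the elimination to a pruned ancestor $\mathcal{P}$ dominated by some $\mathcal{P}'$, writes $\mathcal{P}^\ast = \mathcal{P} \oplus \mathcal{P}_\Delta$, invokes the hypothesis to conclude that $\mathcal{P}' \oplus \mathcal{P}_\Delta$ dominates $\mathcal{P}^\ast$, and contradicts optimality; the constrained min-cost claim is then dispatched in one sentence by observing that the gap filter is applied only after the search ends. You instead argue forward, by induction on the table of Alg.~\ref{alg:seed-optim}, maintaining the invariant that $f[i]$ weakly dominates every subplan buildable from at most $i$ modules, and reading both conclusions off the invariant at the fixed point. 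The hypothesis is load-bearing at exactly the same spot in both arguments (your pruned-prefix branch is the paper's $\mathcal{P}' \oplus \mathcal{P}_\Delta$ step), but your version buys several things the paper leaves implicit: termination of the outer loop; an honest treatment of ties --- note that Def.~\ref{def.skyline} makes domination strict in \emph{both} coordinates, so the replacement at L8 of an equal-accuracy plan by a cheaper one is not ``domination'' in the paper's sense, and your shift to weak domination is precisely what lets the invariant close --- and an actual argument (rather than an assertion) that the returned min-cost feasible plan is optimal, via the dominator of $\mathcal{P}_o$ being feasible and no costlier. You also correctly flag the validity wrinkle that the paper glosses over: $\mathcal{P}' \oplus \mathcal{P}_\Delta$ is silently assumed well-formed even though $\mathcal{P}'$ might already use a module in $\mathcal{P}_\Delta$, so stating the hypothesis in a weak-domination, validity-inclusive form, as you do, is the honest formulation both proofs need. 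One stitch to tighten: the phrase ``its dominator is represented in $f[i]$ by skyline soundness'' is slightly off, since your soundness lemma covers only plans the algorithm generated; you should invoke the hypothesis in its corresponding-descendant form --- the dominator is $\mathcal{Q}' \oplus M(\theta_M)$, which the algorithm \emph{does} generate by extending the stored $\mathcal{Q}'$ at L6--L7 --- and then finish by transitivity of weak domination, which is evidently what you intend.
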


\begin{proof}
We will prove by contradiction. Assume the \sys optimizer eliminated $\mathcal{P}^\ast$. Then there must exist a configuration $\mathcal{P}$ which is an ancestor of $\mathcal{P}^*$ (i.e., $\mathcal{P}^\ast = \mathcal{P} \oplus \mathcal{P}_\Delta$ for some $\mathcal{P}_\Delta$ representing the appended modules during the transition from $\mathcal{P}$ to $\mathcal{P}^\ast$), and $\mathcal{P}$ was eliminated because it is dominated by some other plan $\mathcal{P}'$. Since the descendants of a dominated subplan are also dominated, $\mathcal{P}' \oplus \mathcal{P}_\Delta$ dominates $\mathcal{P}^\ast$. This contradicts the optimality of $\mathcal{P}^\ast$. Therefore, by proof of contradiction, the optimal plan $\mathcal{P}^\ast$ will not be pruned by the \sys optimizer. Since no optimal plan is removed, the \sys optimizer preserves optimality. 

The same proof applies to $\min_{\mathcal{P}}\ \mathcal{C}(\mathcal{P})$, as the filter $\mathcal{A}(\Theta^\ast) - \mathcal{A}(\Theta) \le G$ only takes place at the end of the search.
\end{proof}

\begin{figure}[t]
  \centering
  \vspace{-2mm}
  \includegraphics[width=1.0\linewidth]{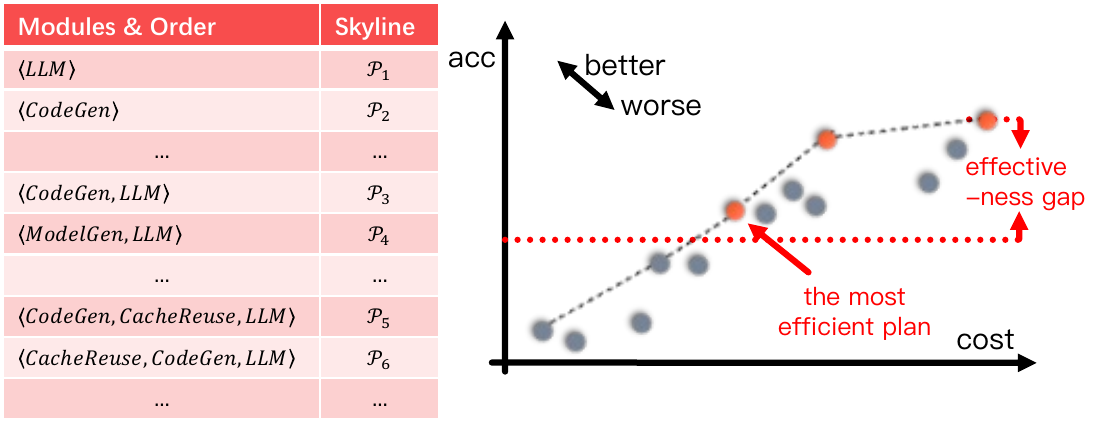}
  \vspace{-6mm}
  \caption{Illustration of the \sys Optimizer.}
  \vspace{-4mm}
  \label{fig:optim}
\end{figure}

Note the assumption that the descendants of a dominated subplan are also dominated typically holds. For example, if the subplan $\mathcal{P}$ = $[\langle M_2(\theta_2), M_3(\theta_3) \rangle]$ composed of modules $M_2$ and $M_3$ is dominated by another subplan $\mathcal{P}'$ = $[\langle M_2(\theta'_2), M_3(\theta'_3) \rangle]$ which is composed of modules $M_2$ and $M_3$ as well, but uses different parameter configurations. 
Appending a new module $M_1$ will migrate $\mathcal{P}$ to $[\langle M_2(\theta_2), M_3(\theta_3), M_1(\theta_1)\rangle]$ and $\mathcal{P}'$ to $[\langle M_2(\theta'_2), M_3(\theta'_3), M_1(\theta'_1) \rangle]$. It is unlikely that $\mathcal{P}$ will suddenly become better than $\mathcal{P}'$, especially because \sys optimizer will configure module $M_1$ differently in $\mathcal{P}$ and $\mathcal{P}'$ to maximize their performance boost. 
Note the Selinger optimizer makes the similar assumption. However, unlike the Selinger optimizer which only keeps one optimal subplan for each subquery, the \sys optimizer potentially could store multiple skyline plans for the same subset of modules. This further reduces the chance that the descendants of a dominated subplan will become the optimal plan at the end.  

\subsection{Specialized \sys Optimizer}
\label{sec:optim-specialized}
Although it reduces the search space by reusing subplans, the generic \sys optimizer is still very time-consuming. Therefore, we propose to further optimize its efficiency by leveraging the unique properties of the \sys execution pipeline and \dm tasks. 
In particular, we design a specialized optimizer based on the following optimizations: 1) Prioritizing the ordering of the modules; 2) Incorporating domain-specific constraints to restrict the search space; 3) Accelerating the hyperparameter search for each module; and 4) Cached validation.

\noindent\textbf{Optimization \#1: Prioritizing the Module Ordering.} When adding a module into an existing subplan and searching for its optimal hyperparameters, the generic \sys optimizer has to enumerate all possible orders, resulting in a large search space. To address this issue, \sys introduces a strategy to prioritize the module ordering before starting to search for the hyperparameters of the new module. It is based on our observation that due to the fallback mechanism of the \sys pipeline, the ordering of the modules often does not matter much to the accuracy. As an example, suppose a pipeline is composed of the ModelGen module and the LLM module, where the LLM module usually has a higher accuracy than the ModelGen module. As long as ModelGen could appropriately fallback to LLM, running ModelGen ahead of LLM would not necessarily lead to an accuracy lower than running LLM first. Moreover, with a limited number of validation examples, in fact it is rather difficult to accurately measure the effectiveness of each possible subplan. Therefore, it seems unpromising to enumerate a larger search space, but only reap a marginal accuracy gain which could possibly be canceled out by the estimation error. 

On the other hand, the ordering indeed matters to the execution cost. Clearly, executing ModelGen first and only fallbacking to LLM on a small number of data records will be much more cost effective than routing all data records to LLM first. 

Therefore, in this work, we propose a strategy that efficiently produces an order to minimize the execution cost of the pipeline.

We start with defining a way to quickly estimate the execution cost of a \sys plan. 

\begin{definition}[The Execution Cost of \sys Plan]
\label{def:seed-optim-def}
Each module $M_i$ has a fixed hyperparameter configuration $\theta_i$ which determines its execution cost $\mathcal{C}_i$ and the fallback probability $p_i \in [0,1]$ that $M_i$ will route a tuple to the next module in the pipeline. An execution plan $\mathcal{P}$ consists of $n$ modules ordered in the sequence $\langle M_i \rangle_{i=1}^n$. The total estimated execution cost of the plan $\mathcal{P}$ can be calculated as:
\vspace{-2mm}
\begin{equation}
\label{eq.cost}
\mathcal{C}(\mathcal{P}) = \sum_{i=1}^n \left[\left(\prod_{j=1}^{i-1} p_j\right) \mathcal{C}_i\right]
\end{equation}

\end{definition}

In Def.~\ref{def:seed-optim-def}, for each module $M_i$, its execution cost $\mathcal{C}_i$ and fallback ratio $p_i$ are estimated independently. Thus, \sys only has to compute these statistics once and repeatedly uses them in the optimization process. The estimation of $\mathcal{C}_i$ and $p_i$ can utilize both labeled validation data and unlabelled data collected during execution, as ground truth is not required for cost and fallback ratio estimation.

\vspace{-4mm} 
\begin{equation}
\label{eq.score}
\mathit{priority(M_i)} = \frac{1-p_i}{\mathcal{C}_i}
\end{equation}

Once these statistics are obtained, the module ordering algorithm in \sys uses Eq.~\ref{eq.score} to calculate the priority for each module and then sorts the modules in descending order of their priorities.

Next, we prove that this simple sorting algorithm indeed minimizes the total execution cost of a plan.

\begin{theorem}
\label{thm:seed-optim-order}
For a set of modules $\mathbb M(\theta)$, sorting the modules $M_i \in \mathbb M$ in descending order of $\mathit{priority(M_i)}$ will result in a plan $\mathcal P = [\mathbb M(\theta), O_{\mathit{priority}}(\mathbb M)]$ that minimizes $\mathcal{C}(\mathcal{P})$.
\end{theorem}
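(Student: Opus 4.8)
The plan is to prove the theorem via a classical \emph{exchange argument} (an adjacent-transposition, bubble-sort style argument) on the ordering of the modules. The core observation is that the cost functional in Eq.~\ref{eq.cost} decomposes cleanly when we focus on two adjacent modules: the contribution of everything preceding them factors out as a common prefix product, and the contribution of everything following them is invariant under swapping the pair, because the prefix product seen by any later module depends on the two fallback probabilities only through their \emph{product}, which is symmetric.

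First I would fix an arbitrary ordering $\langle M_i \rangle_{i=1}^n$ and consider swapping two adjacent modules $A$ and $B$ sitting at positions $k$ and $k+1$. Writing $P = \prod_{j=1}^{k-1} p_j$ for the prefix product of all modules preceding position $k$, the joint contribution of the pair to $\mathcal{C}(\mathcal{P})$ is $P(\mathcal{C}_A + p_A \mathcal{C}_B)$ when $A$ precedes $B$, and $P(\mathcal{C}_B + p_B \mathcal{C}_A)$ after the swap. Every other term in Eq.~\ref{eq.cost} is unchanged: the terms for positions before $k$ plainly do not involve $A$ or $B$, and the terms for positions after $k+1$ each carry the factor $p_A p_B = p_B p_A$, which the swap leaves intact. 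This reduces the global comparison to a purely local one between the two candidate orderings of the pair.

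Next I would compare the two local expressions. A short rearrangement shows that keeping $A$ before $B$ is no worse than swapping precisely when $\mathcal{C}_A(1-p_B) \le \mathcal{C}_B(1-p_A)$, which (dividing by $\mathcal{C}_A \mathcal{C}_B > 0$) is equivalent to $\frac{1-p_A}{\mathcal{C}_A} \ge \frac{1-p_B}{\mathcal{C}_B}$, i.e. $\mathit{priority}(A) \ge \mathit{priority}(B)$. Hence any adjacent pair that violates the descending-priority order can be transposed without increasing the total cost. To finish, I would observe that any ordering not already sorted in descending priority order must contain at least one adjacent inversion, and that repeatedly eliminating such inversions terminates at the priority-sorted order while never increasing $\mathcal{C}(\mathcal{P})$; therefore the priority-sorted order attains the minimum.

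I do not anticipate a serious obstacle: the argument is essentially Smith's rule for single-machine scheduling, adapted to multiplicative fallback weights. The only points requiring care are the degenerate cases. Ties in priority can be broken arbitrarily, since the swap inequality then holds with equality and any such ordering is equally optimal. A module with $\mathcal{C}_i = 0$ makes the ratio in Eq.~\ref{eq.score} ill-defined, but such a module contributes nothing to the cost and the swap inequality continues to favor placing it as early as possible, so it can be handled by convention or a limiting argument. The main technical step is simply making the prefix/suffix decomposition of the cost rigorous enough that the two-module comparison genuinely isolates the effect of a single transposition.
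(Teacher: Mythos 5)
Your proposal is correct and takes essentially the same route as the paper's proof: an adjacent-transposition exchange argument showing that $A$ should precede $B$ exactly when $\frac{1-p_A}{\mathcal{C}_A} \ge \frac{1-p_B}{\mathcal{C}_B}$, followed by a bubble-sort argument to conclude global optimality of the descending-priority order. Your writeup is in fact slightly more careful than the paper's, since you explicitly justify that the suffix terms are invariant under the swap (via the symmetry of $p_A p_B$) and address the degenerate cases $\mathcal{C}_i = 0$ and ties, which the paper's proof passes over silently.
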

\begin{proof}
Given two adjacent modules $M_1$ and $M_2$, swapping $M_1$ and $M_2$ would not impact the execution cost of the modules before or after them in the pipeline. Thus, consider two adjacent modules $M_1$: $(p_1,C_1)$ and $M_2$: $(p_2,C_2)$, denoting the execution cost of the modules before $M_1$ and $M_2$ as $x$, then module $M_1$ should be in front of module $M_2$ in the optimal ordering, if and only if:
$$
\begin{aligned}
p_1 \times (p_2 \times x+{\mathcal{C}}_2)+{\mathcal{C}}_1 &\le p_2\times (p_1 \times x+{\mathcal{C}}_1)+{\mathcal{C}}_2 \\
\Leftrightarrow p_1 \times p_2 \times x + p_1 \times {\mathcal{C}}_2+{\mathcal{C}}_1 &\le p_1 \times p_2 \times x + p_2 \times {\mathcal{C}}_1+{\mathcal{C}}_2 \\
\Leftrightarrow p_1 \times {\mathcal{C}}_2+{\mathcal{C}}_1 &\le p_2 \times {\mathcal{C}}_1+{\mathcal{C}}_2 \\
\Leftrightarrow \frac{p_1}{{\mathcal{C}}_1}+\frac{1}{{\mathcal{C}}_2} &\le \frac{p_2}{{\mathcal{C}}_2}+\frac{1}{{\mathcal{C}}_1} \\
\Leftrightarrow \frac{1-p_2}{{\mathcal{C}}_2} &\le \frac{1-p_1}{{\mathcal{C}}_1} \\
\Leftrightarrow priority(M_2) &\le priority(M_1) \\
\end{aligned}
$$

Then by considering bubble sort, which only swaps the adjacent elements at each step, we can conclude that when all modules are sorted by $(1-p_i)/\mathcal{C}_i$ in descending order, $\mathcal{C}(\mathcal{P})$ is minimized.
\end{proof}

\begin{algorithm}[t!]
    \setstretch{0.60}
    \caption{Specialized SEED Optimizer}\label{alg:seed-optim-2}
	\LinesNumbered
    \smaller
	\KwIn{Task, Gap $G$, Init $I \in \lbrace True, False \rbrace$, Beam size $B$}
	\KwOut{Configuration $\mathcal{P}$}
        $f[0][0] \leftarrow \mathcal{P}_{\varnothing}$; \\
        \For {$i \in [LLM, CodeGen, ModelGen, CacheReuse]$} {
            \If {$f[i]$ is cached} {
                $f[i] \leftarrow \texttt{load\_cache}()$; continue; \\
            }
            $f[i] \leftarrow f[i^{prev}]$; \\
            \For{$\mathcal{P}^{prev} \in f[i^{prev}]$} {
                \For{$\theta_i \in \texttt{Default/Grid/Ternary}(M_i)$}{
                    $\mathcal{P}^{i} \leftarrow \texttt{sort\_by\_priority}(\texttt{append}(\mathcal{P}^{prev},\theta_M))$; \\
                    $f[i][\mathcal {A}(\mathcal{P}^{i})] \leftarrow \arg\min_{\mathcal{P} \in \lbrace f[i][\mathcal {A}(\mathcal{P}^{i})], \mathcal{P}^{i} \rbrace} \mathcal {C}(\mathcal{P})$; \\
                }
            }
            $\mathcal{P}^\ast \leftarrow \arg\max_{\mathcal{P} \in f[i]} {\mathcal{A}}(\mathcal{P})$; \\
            \For {$\mathcal{P} \in f[i]$ s.t.: ${\mathcal{A}}(\mathcal{P}^\ast) - {\mathcal{A}}(\mathcal{P}') > G$} {
                $f[i].\texttt{delete}(\mathcal{P})$; \\
            }
            \For {$\mathcal{P} \in f[i]$} {
                $f[i][\mathcal {A}(\mathcal{P})] \leftarrow \arg\min_{\mathcal{P} \in \lbrace f[i][a] \vert a \ge \mathcal {A}(\mathcal{P}) \rbrace} \mathcal {C}(\mathcal{P})$; \\
            }
            \If {$\vert f[i] \vert > B$} {
                $f[i] \leftarrow TopB_{\mathcal{A}}(f[i])$; \\
            } 
            \If {$i = LLM$ or $i = CodeGen$} {
                $\texttt{save\_cache}(f[i])$; \\
            }
        }
        return $\arg\min_{\mathcal{P} \in f[i]} {\mathcal{C}}(\mathcal{P})$; \\
\end{algorithm}
%\vspace{-4mm}

This insight allows the \sys optimizer to focus on the configuration of the modules rather than their ordering, e.g. determining if it should activate a module or not and selecting its hyperparameters. That is, it adds the modules into the plan one at a time and estimates the cost and effectiveness of this subplan based on the execution order computed with Theorem~\ref{thm:seed-optim-order}, and decides if this subplan should be kept based on the skyline rule described in Sec.~\ref{sec:optim-general}. This is equivalent to directly inserting a new module in the appropriate position based on its priority, without enumerating all possible orders.
Note that during optimization the order that considers the modules does not have to comply to their actual execution order determined above. In practice, by optimizing modules in descending order of their estimated effectiveness, \sys optimizer is able to find strong subplans at the early iterations. This yields a tighter Pareto frontier earlier in the process (Alg.~\ref{alg:seed-optim-2}, L11$\sim$12). In this way, the \sys optimizer is able to prune the suboptimal plans more effectively as the optimization proceeds.

\noindent\textbf{Optimization \#2: Task-specific Constraints.}
Secondly, to further reduce the search space, the \sys optimizer implements a set of predefined rules tailored for \dm tasks as well as common sub-types of \dm tasks, like data imputation or data extraction. For example, the hyperaparameters for code generation are typically agnostic to the specific \dm problem, suggesting that \sys could reduce the search space by pre-defining a set of default configurations (Alg.~\ref{alg:seed-optim-2}, L7) for the CodeGen module and pick one from them during optimization. In practice, \sys restricts CodeGen configuration to three options: 1) not activating CodeGen, 2) using a single code snippet, or 3) using an ensemble with a fixed number of branches. As another example, for data discovery tasks where LLMs show clear advantage over other Modules in effectiveness, \sys disables the CodeGen, CacheReuse and ModelGen modules by default, leaving only the LLM module in play.

\noindent\textbf{Optimization \#3: Improved Hyperparameter Search.}
Thirdly, to configure the hyperparameters within a specific module, various approximated search approaches can be incorporated to speed up the search. For example, for continuous parameters, although a large step-size grid search is often adequate in most cases, when a more fine-grained search is desirable, \sys can employ ternary search~\cite{ternary} to effectively reduce the search space (see Alg.~\ref{alg:seed-optim-2}, L7). Another case is when the number of configurations on the Pareto frontier become excessive, a beam search-like approach can be employed to retain only the configurations with the best performance (Alg.~\ref{alg:seed-optim-2}, L15$\sim$16). This allows \sys to trade off the optimality of the optimizer for a further restriction on the number of configurations to search.

\noindent\textbf{Optimization \#4: Cached Validation.} As mentioned above, given a plan, \sys evaluates its effectiveness $\mathcal{A}$ by executing the plan on a validation dataset. However, evaluating all plans in this way can be extremely costly. This is particularly true for the LLM module, as repeated issuing LLM calls would be prohibitively expensive. To overcome this challenge, we develop a caching mechanism. It caches all input-output pairs for each module $M_i(\theta_i)$. Consequently, if the same $M_i(\theta_i)$ appears in multiple plans, the cached results will be reused. This approach significantly reduces the plan evaluation cost of the \sys optimizer.

\subsection{Dynamic Optimization}
\label{sec:optim-dynamic}

As mentioned above, instead of generating one single static plan, the \sys optimizer dynamically re-optimizes in response to the changes on the performance of the modules. 
After initial deployment, \sys will accumulate more data in cache storage. This will benefit the CacheReuse module as well as the ModelGen module through retraining with new data. Specifically, in the beginning when no labelled data is available, the initial plan will only consider the LLM and CodeGen modules. However, after \sys accumulates a certain amount of new data and improves the performance of the CacheReuse and ModelGen modules remarkably, \sys will trigger a re-optimization to regenerate the plan, thus continuously improving the quality of the execution plan.

Rather than re-running the optimization from scratch, the \sys optimizer leverages the results from the previous optimization rounds to reduce search time. By caching and reusing the prior configurations where possible (Alg.~\ref{alg:seed-optim}), it updates the plans more efficiently.

Firstly, because the LLM module and the CodeGen module typically do not benefit much from the accumulated data, \sys only needs to compile these two modules once. Therefore, the \sys optimizer is able to directly reuse their existing configurations during subsequent optimization (Alg.~\ref{alg:seed-optim-2}, L17$\sim$18). 

Secondly, during the execution time, assuming the data distribution only changes gradually over short periods, any required modifications to the optimal plan are also likely to be minor. Rather than performing a full grid or ternary search across the entire hyperparameter space of CacheReuse and ModelGen, the optimizer restricts the search space of these continuous hyperparameters to a narrow range centered around the values from the previous optimal plan. This reduced search space exploits the incremental nature of distribution shifts, expediting the optimization process.

\noindent\textbf{Putting It All Together.}
After incorporating all the optimizations in Sec.~\ref{sec:optim-specialized} and Sec.~\ref{sec:optim-dynamic},  the final \sys optimizer (Alg.~\ref{alg:seed-optim-2}) starts by jointly optimizing the LLM module and CodeGen module and keeps a Pareto frontier. This Pareto frontier is computed once during the initial optimization process and reused for further re-optimizations (Algorithm~\ref{alg:seed-optim-2}, L17$\sim$18). Then during the execution, the CacheReuse and ModelGen modules are periodically re-evaluated, incorporating improvements such as approximate search techniques and heuristics derived from domain knowledge and the previous optimization rounds. This dynamic process allows the \sys optimizer to efficiently generate plans with guaranteed performance.

\end{sloppypar}

\section{\sys Modules}
\label{sec:modules}
\begin{sloppypar}
\sys supports three types of modules that use LLMs in various ways: CacheReuse, CodeGen, and ModelGen. Although each module draws inspiration from the existing techniques such as reusing the cached LLM outputs~\cite{gpt-cache}, \sys is the first system that leverages these techniques to produce an execution pipeline instance optimized for a \dm task. Moreover, we propose new techniques to make the {\it CodeGen} and {\it ModelGen} modules better serve \dm tasks, which are introduced in Sec.~\ref{sec:modules-codegen} and Sec.~\ref{sec:modules-model}.

\subsection{CodeGen Module}
\label{sec:modules-codegen}

Although using LLMs to generate code has attracted a lot of attention~\cite{metagpt,coderl,codet,recode,agentcoder}, LLMs still struggle to generate large-scale applications or programs requiring complex logic. However, some \dm tasks may require implementing complex logic corresponding to multiple combined rules. For instance, data extraction that extracts multiple types of targeted information from unstructured data (e.g., scraping data records from HTML files) is difficult to accomplish with a single code snippet.

We propose {\it code ensemble with evolution} to address this challenge. It produces a set of code snippets that compensate for each other. The code ensemble can then effectively handle scenarios that require complex logic in the program. We design an evolutionary method to iteratively optimize the ensemble. Next, we elaborate this technique in more detail.

\begin{figure}[t]
  \vspace{-2mm}
  \centering
  \includegraphics[width=0.95\linewidth]{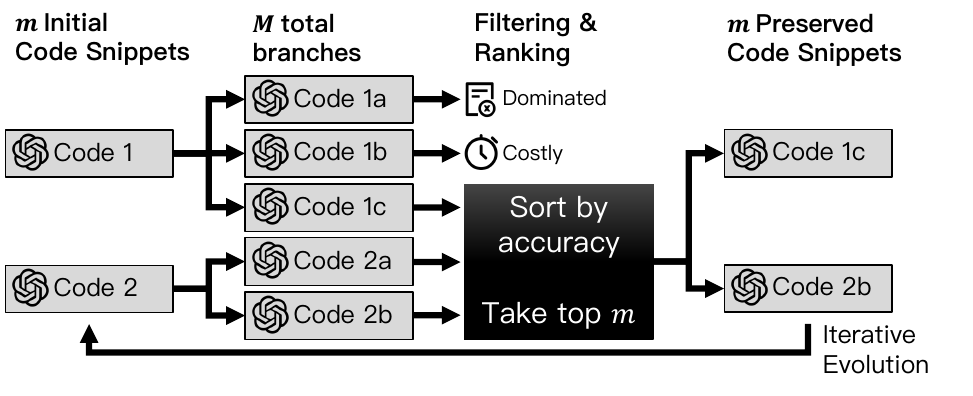}
  \vspace{-6mm}
  \caption{Illustration of one evolution iteration.}
  \label{fig:codegen-ensemble}
  \vspace{-6mm}
\end{figure}

\sys produces multiple modular code snippets and combines them to form a complete solution. In particular, we design an {\it evolutionary algorithm} which produces a diverse set of programs, automatically measures their quality (using generated test cases), and aggregates the most effective ones to jointly solve the task. 

\begin{algorithm}[t!]
    \setstretch{0.60}
    \caption{Code Ensemble Evolution}\label{alg:ensemble}
	\LinesNumbered
    \smaller
	\KwIn{Intial Code Snippets $C^0 = \lbrace R \rbrace_{i=1}^m$}
	\KwOut{Code Ensemble $C^t$}
        \For{$t \leftarrow 0$ to $T-1$}{
            $C^{t+1} \leftarrow C^t$; \\
            \For {each $R \in C^t$} {
                $E \leftarrow \texttt{get\_errors\_in\_verificatiion}(R)$; \\
                \For {each $e \in E$}{
                    $A \leftarrow \texttt{get\_advices}(R, \lbrace e \rbrace)$; \\
                    \For {each $a \in A$}{
                        $C^{t+1}.add(\texttt{fixing}(R, \lbrace e \rbrace, a))$; \\
                    }
                }
                $A \leftarrow \texttt{get\_advices}(R, E)$; \\
                \For {each $a \in A$}{
                    $C^{t+1}.add(\texttt{fixing}(R, E, a))$; \\
                }
            }
            \For {each $R_i \in C^{t+1}$} {
                \If {$R_i$ timeouts or fallbacks too often}{
                    $C^{t+1}.del(R_i)$; \\
                }
            }
            \For {each $R_i \ne R_j \in C^{t+1}$} {
                \If {$R_j$ dominates $R_i$}{
                    $C^{t+1}.del(R_i)$; \\
                }
            }
            sort $C^{t+1}$ by accuracy from high to low; \\
            $C^{t+1} \leftarrow C^{t+1}_{0...m-1}$; \\
        }
        return $C^T$; \\
\end{algorithm}
%\vspace{.05in}
\noindent\textbf{The Evolutionary Algorithm.}
As shown in Alg.~\ref{alg:ensemble}, the evolutionary algorithm contains two parts: {\it {code branching}} and {\it {code filtering}}. \sys starts by creating $m$ initial code snippets. Then, in each evolutionary iteration, with the aid of LLMs, each code snippet branches into several new versions. All code snippets are then ranked and filtered based on their quality. At the end of each iteration, only the top-$M$ ($M > m$) code snippets are preserved and are ready to branch in the next iteration. This evolutionary process continues until the code ensemble does not get better anymore on the validation set.

\sys uses the following {\it code branching} strategies:
\begin{compactitem}
\item {\textbf {Different Tools}}: Here we ask LLMs to use different tools when generating the initial code snippet. For example, in semantic-related problems, using the $\texttt{nltk}$ tool could lead to a syntax tree-based solution, while using $\texttt{fuzzywuzzy}$ could lead to a string matching-based solution.

\item {\textbf {Prompt Rephrasing}}: Here we use different task descriptions $p_{task}$ when generating the initial code snippet. Rather than leaving the user with the prompt engineering job, using a prompt generator~\cite{awesome-chatgpt-prompts} to rephrase the user-provided task prompt could lead to different advice and potentially different solutions.
    
\item {\textbf {Diversify Advice}}: When generating the initial code snippet or fixing a code snippet, we explicitly ask LLMs to produce multiple different pieces of advice to generate different code snippets (Alg.~\ref{alg:ensemble}, L7, L10). 
    
\item {\textbf {Varied Test Cases}}: When fixing a code snippet, if the verification produces multiple error test cases, \sys presents a subset of errors to LLMs for advice: pick each one of the error cases (Alg.~\ref{alg:ensemble}, L5$\sim$6) as well as the full error set (Alg.~\ref{alg:ensemble}, L9). For a set of $n$ errors, this produces $n+1$ branches.
\end{compactitem}
\vspace{.1in}
If, after applying the branching strategies to each of the initial $m$ code snippets, more than $M$ branches emerge, as in Fig.~\ref{fig:codegen-ensemble}, \sys performs {\it {code filtering}} to rank and eliminate ineffective code snippets. A code snippet will be excluded if it meets one of the following criteria:
\begin{compactitem}
    \item {\textbf {Dominated}}: If another code snippet is correct on every test case where the current snippet is correct, the latter is deemed redundant and thus eliminated (Alg.~\ref{alg:ensemble}, 16).
    
    \item {\textbf {Costly}}: Code snippets are encouraged to fall back to LLM queries when confidence is low. However, if a snippet always relies on LLM queries to produce correct answers, it becomes too expensive to utilize and should therefore be discarded. Similarly, snippets that potentially leverage computationally expensive tools are given lower priority or excluded (Alg.~\ref{alg:ensemble}, L13).
    
    \item {\textbf {Inaccurate}}: Code snippets that are incorrect and produce an excessive number of false positives should be discarded. Specifically, after removing dominated and costly code snippets, the surviving code segments are sorted by accuracy on the validation set from high to low and at most $M$ snippets with the highest accuracy are preserved (Alg.~\ref{alg:ensemble}, L18).
\end{compactitem}

After each iteration, at most $M$ snippets are preserved, where $M > m$ is set as a parameter. This process iterates until convergence. The top-$m$ preserved code snippets from the final iteration are then ensembled into the final solution for inference.

\vspace{.05in}
\noindent\textbf{Ensembling.} The code snippets are ensembled either in {\it parallel} or {\it sequentially}.
By default, the code snippets are executed in parallel, and their execution results are aggregated to produce the final answer. This approach is suitable for most scenarios where the return values of the code snippets can be straightforwardly combined. For example, in classification tasks where the code snippets return values from a set of pre-determined classes, a simple majority voting strategy will work. Alternatively, an accuracy-weighted voting strategy can be employed to provide additional flexibility by assigning different weights to the snippets based on their accuracy on the test data, mitigating the influence of the low-accuracy candidates.

While the parallel ensemble is generally suitable, there are specific scenarios where aggregation is challenging. For instance, in data extraction tasks, where the extracted data can be any string, different snippets typically return different results. Running these snippets in parallel and taking the union of all reported answers may result in poor precision (high false positives), while taking their intersection may lead to poor recall (high false negatives). \sys addresses these cases by using a sequential ensemble.

The sequential ensemble first sorts all code snippets by their accuracy on a validation set, from high to low. Then on each data record, \sys executes the code snippet with the highest accuracy first. If this code snippet fails on this data record and returns `None', the code snippet with the second highest accuracy will be triggered, and so on. If all code snippets fail, the system then falls back to the next module in the pipeline. With this cascading execution strategy, \sys dynamically adapts to the data and runs different code snippets on different data records, thus improving the accuracy.

\subsection{ModelGen Module}
\label{sec:modules-model}

In \sys, the {\it ModelGen} module distills a small machine learning model using the LLM responses as pseudo-ground truth. Although using LLMs for annotation is a known idea~\cite{llmlabel}, choosing the model architecture appropriate for a given \dm task presents a challenge due to the vast number of DNN architectures available and the diversity of \dm tasks. It is often hard for the users to decide on the model architecture fitting their tasks.

\sys does not rely on users to select and tune models. Our key insight here is that almost all tasks with expected input-output pairs $\lbrace (x,y) \rbrace$ can be formulated as Seq2Seq generation tasks~\cite{seq2seq,t5}, if we appropriately serialize $x$ and $y$ into a list of tokens. 
For instance, a data imputation task that infers the age of a person given their birth year and death year can be serialized as $x = \textit{``birth\_year=?,death\_year=?}$ and $y = \textit{``living\_age=?''}$, where {\it{`?'}}s are replaced with the decimal string of the actual year and age data. The Seq2Seq paradigm has been shown to be effective and generalizable in supporting NLP tasks~\cite{bart,t5,glm}, and has led to the development of pre-trained models such as T5~\cite{t5} that are widely applicable to a range of tasks.

This suggests it should be possible for us to employ a Seq2Seq-based architecture that can universally support many \dm tasks. For most \dm tasks, except classification and numeric tasks such as data imputation and data extraction, we can directly leverage the Seq2Seq paradigm, while for classification tasks such as entity resolution and data annotation, only the encoder component of the Seq2Seq model is utilized. The encoder is concatenated with an additional Multi-layer Perceptron (MLP) to function as either a classifier or a regressor, replacing the decoder in the Seq2Seq model. In these cases, rather than outputting a list of tokens, the small model produces either a probability distribution over candidate classes or a numeric value. \sys deduces the task type based on the output type, which is specified in the user configuration. A string type suggests a general Seq2Seq model; a float type suggests a regression; while a categorical type suggest a classification task.

When the ModelGen module is uncertain, if falls back to other modules. Specifically, a confidence score can be derived from the output of the deep neural network, and records with low confidence shall fall back to the LLM module. For Seq2Seq tasks, we use the perplexity $PPL(y) = \exp\left(-\frac{1}{n}\sum_{j=1}^n \log p(y_j \vert x, y_{1...j-1}) \right)$ of the generated text to compute confidence. Intuitively, perplexity is the inverse probability that the model predicts for each token in the output training sequence, given the tokens that came before it. A lower perplexity value corresponds to a higher level of confidence in the prediction. Thus, $\frac{1}{PPL(y)} \in (0, 1]$ could be used as the confidence indicator. As for classifiers, the probability associated with the highest predicted class can serve as an indicator of confidence. The closer this probability is to 1, the higher the confidence in the prediction. More specifically, given a $K$-way classification prediction, which is a probability distribution $\lbrace p_y^i \rbrace_{i=1}^K$, where $p_y^i$ is the presumed probability that the instance belongs to class $i$, then $\frac{K \cdot \max_i(p_y^i)-1}{K-1} \in [0,1]$ could be used as the confidence. Such a confidence threshold is a hyperparameter searched by the optimizer.

\end{sloppypar}
%\vspace{-2mm}
\section{Infrastructure}
\label{sec:infra}
\begin{sloppypar}

\sys infrastructure features {\it cache storage}, {\it query batching}, and {\it tools integration}. Universally supporting various \dm tasks, these components boost the performance of the automatically generated, domain specific \dm solution.

\vspace{-2mm}
\subsection{Cache Storage}
\label{sec:infra-cache}
The cache storage preserves all LLM queries and their corresponding responses throughout the compilation and execution phases. ModelGen utilizes the cached LLM responses as training data, while CacheReuse constructs a vector index and directly employs the indexed responses to respond to new LLM queries. Additionally, \sys logs the LLM calls invoked by the CodeGen module during code generation, thereby reducing the expense of repetitive code validation. Apart from storing LLM queries, \sys also caches the input-output pairs of each module, enabling cached validation when executing the specialized \sys optimizer (Sec.~\ref{sec:optim-specialized}).

\subsection{Query Batching}
\label{sec:infra-batching}

Query batching merges multiple queries into a single batched query. For the CacheReuse and ModelGen modules, it allows for batched vector computation. For the CodeGen module, it allows potential parallel execution. But more importantly, for the LLM modules, it eliminates the redundancy among the task descriptions and other auxiliary prompts, thus reducing the overall number of tokens sent to LLMs. Furthermore, in the context of few-shot or zero-shot learning, batching queries supplies LLMs with additional examples. These examples carry more information with respect to the data distribution, making LLMs more robust and reliable in answering queries, thereby potentially improving the effectiveness (Tab.~\ref{tab:topic-classification}).

For example, in a data annotation task that classifies the text into different topics, instead of asking separately {\it {``Here are some topics: A. Sports B. Business C. Education D. Gaming ... What is the topic of the following paragraph? $\texttt{<TEXT>}$''}} for each instance $\texttt{<TEXT>}$, we could ask {\it {``Here are some topics: A. Sports B. Business C. Education D. Gaming ... What is the topic of each of the following paragraphs? 1.$\texttt{<TEXT1>}$ 2.$\texttt{<TEXT2>}$ 3.$\texttt{<TEXT3>}$ ...''}}.An LLM is then instructed to reply in the format of {\it {``1. A. 2. F. 3. B. ...''}}.

\sys investigates 5 methods to form a batch. These methods use the embedding of each individual data record produced by an Sentence-BERT encoder. Given $n$ queries and a fixed batch size $B$:
\begin{compactitem}
\item $\texttt{RND}$: random batching. It randomly composes $n/B$ batches, making each batch adhere to the distribution of the query workload.

\item $\texttt{DIV}$: diverse batching. It runs $B$-way balanced clustering on the embeddings, and then assigns queries in different clusters to the same batch.

\item $\texttt{PRX}$: proximal batching. It runs $n/B$-way balanced clustering on the embeddings, and the queries in the same cluster form a batch.

\item $\texttt{FAR}$ \& $\texttt{CLS}$: distance-based batching. It randomly samples a starting point and then iteratively adds the data point farthest/nearest from existing queries into the batch.
\end{compactitem}

Intuitively, $\texttt{SIM}$ and $\texttt{CLS}$ present similar queries jointly to LLMs. The intuition underneath is to allow LLMs to see similar cases and thus easier to make decisions. On the other hand, $\texttt{DIV}$ and $\texttt{FAR}$ present diverse queries together to LLMs. This allows LLMs to see diverse cases and make more robust decisions. 

\subsection{Tools Integration}
\label{sec:infra-tools}

\sys seamless integrates the tools supplied by the users into the solution such that it could efficiently access local data and extract the required information on demand, without having to upload the entire data to LLMs. As an example, in a table retrieval task, allowing LLMs to use some tools to collect information from the local databases will help them better align the user queries with discovered tables and thus significantly enhance the correctness.

Once the users specify a set of tools, \sys has to decide which tools to use for a given LLM query. This is challenging because different queries, even if they belong to the same task, might need different tools under different situations. Moreover, it might need multiple tools to collaboratively solve a complex problem, where the order of calling these tools matters. 

\sys leverages the reasoning ability of LLMs to address the above challenges. Rather than count on one single LLM query to solve the problem at once, \sys interacts with LLMs, automatically decomposes the original complex problem into multiple steps, and solves the problem step by step. In this iterative task-solving process, it automatically and dynamically decides which tool to use based on the feedback acquired in the last step.

\begin{figure}[t]
  \centering
  \vspace{-2mm}
  \includegraphics[width=0.8\linewidth]{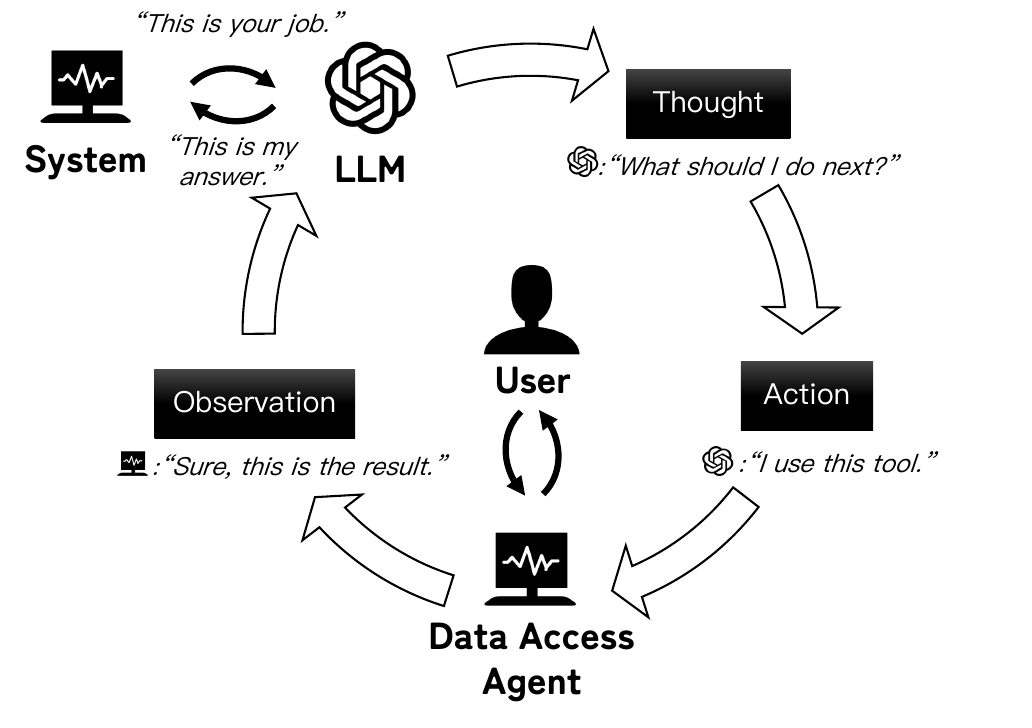}
  \vspace{-4mm}
  \caption{Illustration of iterative tool invocation.}
  \label{fig:tools}
  \vspace{-4mm}
\end{figure}

As shown in Fig.~\ref{fig:tools}, there are three key parts in this process: action, thought, and observation~\cite{react}. Each action corresponds to a tool to be used. A thought corresponds to an idea that LLM suggests taking to solve the problem, while an observation represents the execution results after running a tool.

More specifically, given an input $x$ and a list of actions $A = (T \bigcup \lbrace \texttt{SUBMIT} \rbrace) \times S$, where $T$ represents the set of tools and $S$ denotes their parameters. The $\texttt{SUBMIT}$ action is a special tool that terminates the multi-step iterative process. In each step, the LLM first outputs its thought $p_i$ in natural language, which triggers one of the actions $a_i = (t_i,s_i) \in A$. Consequently, \sys would invoke the tool $t_i$ with parameter $s_i$ corresponding to action $a_i$, resulting in an observation $o_i = t_i(s_i)$. Subsequently, based on the newly received observation $o_i$, the LLM produces a new thought $p_{i+1}$ which invokes a new action $a_{i+1} \in A$. This process continues until an observation occurs that triggers a special termination tool $a = (\texttt{SUBMIT}, y)$, where the parameter $y$ represents the LLM's final answer.

We use a table retrieval task as an illustrative example to further elucidate this process. With a set of pre-defined tools, given a query, the LLM starts by generating a thought that it should first get candidates through $\texttt{SEARCH\_KEYWORDS}$ or $\texttt{BM25}$. After the LLM chooses an action, \sys will invoke the corresponding tool and returns a list of candidate tables. Based on the list of candidate tables, the LLM may get a new thought that a few tables are worth further investigation. Consequently, it selects the $\texttt{GET\_SCHEMA}$ action to get the schema so that it has more information of these tables. After \sys returns the observation, the LLM either proceeds to get more candidates or invokes the $\texttt{SUBMIT}$ action to submit some of these tables as the final answer.
\end{sloppypar}
\section{Experiments}
\label{sec:exp}

Our experiments focus on answering the following questions:
\begin{compactitem}
\item How do the solutions that \sys produces perform in comparison to its generic counterpart and other task specific solutions?

\item How do the solutions that \sys produces perform in comparison to using LLMs to process each data record?

\item how does the \sys optimizer perform?

\item How do each module in \sys and the respective optimizations we propose perform?

\end{compactitem}

As shown in Tab.~\ref{tab:exp-setup}, we selected $9$ datasets spanning $5$ tasks to show the performance of \sys. Together these tasks cover all types of modules in \sys and the typical execution pipelines. We present the experiments {\it task by task}, including the datasets the task runs on, the evaluation metric, the baseline methods we compare against, and the results. We evaluate the internals of the \sys optimizer on the entity resolution task with 2 datasets. Due to the limited space, we describe the user configuration (input text) in Appendix~\ref{sec:appendix:config} and the data discovery experiment in Appendix~\ref{sec:exp:dd} of the extended version~\cite{seed}. 

\noindent\textbf{Parameter Settings.}
For all the experiments, we employ a GPT-4 \cite{gpt4} with temperature $T=0$ as the LLM backend. Since \sys primarily operates through code and small models, or calls the LLM service, there is no special hardware required to run the experiments. In the case of ModelGen, we initialize it using a \texttt{t5-small} model and then fine-tune them on $1024$ unlabeled validation data, employing LLM-generated pseudo labels to simulate a deployed system. The CacheReuse model uses a pre-trained frozen \texttt{t5-small} model checkpoint. CodeGen selects its branching number from either $m=1$ or $m=4$, where a branching number of $m=4$ means that four code snippets are generated. As the cost of calling LLMs dominates the cost of the execution pipeline, we use the number of queries processed by the LLM to represent the cost $\mathcal{C}$ of a plan, denoted as ``LLM R''.

\begin{table}[htbp]
\scriptsize
\centering
  \caption{Task, Datasets, and utilized components.}
  \vspace{-4mm}
  \label{tab:exp-setup}
  \begin{tabularx}{1.0\linewidth}{llcccc}
    \toprule
    Task                                         & Dataset                   & CodeGen                      & CacheReuse                    & ModelGen                      & LLM                           \\
    \toprule
    \multirow{2}{*}{\scriptsize Entity Resolution }   & {\scriptsize Amazon-Google }   & \multirow{2}{*}{\checkmark}  & \multirow{2}{*}{\checkmark}   & \multirow{2}{*}{\checkmark}   & \multirow{2}{*}{\checkmark}   \\
                                                 & {\scriptsize Abt-Buy       }   &                              &                               &                               &                               \\
    \midrule
    \multirow{2}{*}{\scriptsize Data Imputation   }   & {\scriptsize Buy           }   & \multirow{2}{*}{\checkmark}  & \multirow{2}{*}{\checkmark}   & \multirow{2}{*}{\checkmark}   & \multirow{2}{*}{\checkmark}   \\
                                                 & {\scriptsize Restaurant    }   &                              &                               &                               &                               \\
    \midrule
    \multirow{1}{*}{\scriptsize Data Extraction   }   & {\scriptsize Wiki NBA      }   & \multirow{1}{*}{\checkmark}  &                               &                               & \multirow{1}{*}{\checkmark}   \\
    \midrule
    \multirow{1}{*}{\scriptsize Data Annotation   }   & {\scriptsize OS            }   &                              & \multirow{1}{*}{\checkmark}   &                               & \multirow{1}{*}{\checkmark}   \\
    \midrule
    \multirow{3}{*}{\scriptsize Data Discovery    }   & {\scriptsize Spider        }   &                              &                               &                               & \multirow{3}{*}{\checkmark}   \\
                                                 & {\scriptsize OTT-QA        }   &                              &                               &                               &                               \\
                                                 & {\scriptsize KGDBQA        }   &                              &                               &                               &                               \\
    \bottomrule
  \end{tabularx}
  \vspace{-4mm}
\end{table}

\subsection{Entity Resolution}
\label{sec:exp:er}
Entity resolution determines whether two data records refer to the same entity.

\noindent\textbf{Datasets.} We use the {\it Amazon-Google} and {\it Abt-Buy} datasets~\cite{ee-benchmark} for evaluation. Both tasks aim to determine whether two online products are the same. Amazon-Google has three attributes: title, manufacturer, and price. Abt-Buy has three attributes as well: name, description, and price. We adopt the test split used in MatchGPT~\cite{matchgpt}.

\noindent\textbf{Evaluation Metric.} We report the {\it F1} score, as in previous works.

\noindent\textbf{Baselines.} We compare against two approaches: (1) DITTO~\cite{ditto}, a supervised learning method trained on thousands of labeled examples. (2) MatchGPT~\cite{matchgpt}, the current SOTA in using LLMs to solve the entity resolution problem. MatchGPT has two versions. MatchGPT Generic manually selects few-shot examples, while MatchGPT Specialized utilizes the Jaccard similarity between two entities to retrieve few-shot examples. 
\begin{table}[htbp]
\small
\centering
\vspace{-2mm}
\caption{Entity Resolution: Quantitative Results}
\vspace{-4mm}
\label{tab:entity-resolution}
    \begin{tabularx}{0.94\linewidth}{lcrcr}
        \toprule
        \multirow{2}{*}{Method}     & \multicolumn{2}{c}{Amazon-Google} & \multicolumn{2}{c}{Abt-Buy}   \\
                                    & F1        & LLM R                 & F1            & LLM R         \\
        \midrule
        DITTO                       & $80.7$    & N / A                 & $91.3$        & N / A         \\
        MatchGPT Generic            & $76.4$    & $100.0\%$             & $95.8$        & $100.0\%$     \\
        MatchGPT Specialized        & $85.2$    & $100.0\%$             & $94.4$        & $100.0\%$     \\
        \sys LLM Only               & $77.4$    & $100.0\%$             & $92.1$        & $100.0\%$     \\
        \sys Max                    & $79.2$    &  $22.4\%$             & $92.8$        &  $27.1\%$     \\
        \sys Opt                    & $76.2$    &  $10.4\%$             & $88.0$        &   $2.9\%$     \\
        \bottomrule
    \end{tabularx}
\vspace{-2mm}
\end{table}

In Table~\ref{tab:entity-resolution}, ``LLM Only'' represents a solution that uses only the LLM module in \sys. ``\sys Max'' represents the best F1 score among all the plans enumerated by the \sys optimizer. ``\sys Opt'' denotes the plan selected by the \sys optimizer, when the effectiveness gap parameter $G$ (Def.~\ref{thm:seed-optim-def}) is set to 5\%.  Table~\ref{tab:entity-resolution} shows that the \sys optimizer produces a plan that reduces the number of LLM calls by at least $90\%$, while guaranteeing that the F1 is at most 5\% lower than \sys Max. This confirms the effectiveness of the \sys optimizer. 
The accuracy of \sys is close to DITTO trained on thousands of labels and outperform MatchGPT Generic. MatchGPT Specialized has a higher F1 than \sys Max, because it uses prompts carefully tuned w.r.t. the dataset. \sys instead uses a generic prompt template without prompt engineering.

\begin{table}[htbp]
\tiny
\centering
\vspace{-2mm}
\caption{Entity Resolution: \sys Optimizer}
\vspace{-4mm}
\label{tab:entity-resolution-optim}
    \begin{tabularx}{0.98\linewidth}{lcccccccccc}
        \toprule
            & \multicolumn{5}{c}{Amazon-Google}                                             & \multicolumn{5}{c}{Abt-Buy} \\
            \cmidrule(lr){2-6}                                                              \cmidrule(lr){7-11}
            & \multicolumn{2}{c}{Max} & \multicolumn{2}{c}{Opt} & \multirow{2}{*}{\#Plan}   & \multicolumn{2}{c}{Max} & \multicolumn{2}{c}{Opt} & \multirow{2}{*}{\#Plan}   \\
            \cmidrule(lr){2-3}        \cmidrule(lr){4-5}                                    \cmidrule(lr){7-8}        \cmidrule(lr){9-10}
            & F1     & LLM R          & F1     & LLM R                &                     & F1     & LLM R          & F1     & LLM R          &                           \\
        \midrule
        BF  & $79.2$ & $22.4\%$       & $76.8$ & $ 3.7\%$             & $50146$             & $92.8$ & $27.1\%$       & $87.9$ & $1.8\%$        & $50146$                   \\
        GEN & $79.2$ & $22.4\%$       & $76.8$ & $ 3.7\%$             & $19541$             & $92.8$ & $27.1\%$       & $87.9$ & $1.8\%$        & $20300$                   \\
        SPE & $79.2$ & $22.4\%$       & $76.2$ & $10.4\%$             & $  650$             & $92.8$ & $27.1\%$       & $88.0$ & $2.9\%$        & $  782$                   \\
        \bottomrule
    \end{tabularx}
\vspace{-4mm}
\end{table}

Next, we evaluate the effectiveness of the \sys optimizer. Table~\ref{tab:entity-resolution-optim} presents the results of three different optimization approaches: `BF' represents an exhaustive search over all possible plans; GEN represents the generic \sys optimizer (Section~\ref{sec:optim-general}); while SPE represents the specialized \sys optimizer (Section~\ref{sec:optim-specialized}). Note that the generic \sys optimizer consistently identifies the same ``Max'' and ``Opt'' plans to the exhaustive search, indicating that the assumption in Sec.~\ref{sec:optim} that the descendants or a dominated plan would not become a part of the final optimal plan holds in practice. The specialized \sys optimizer SPE produces the same ``Max'' plan but a slightly less efficient ``Opt'' plan. However, it reduces the number of plans that need to be searched by $99.5\%$. In the ``Opt'' plan produced by SPE, the LLM only handles $10.4\%$ of the queries, while CacheReuse and ModelGen answer $41.7\%$ and $47.9\%$ of the queries respectively.

\subsection{Data Imputation}
\label{sec:exp:di}
As a typical data cleaning task, data imputation~\cite{imputation} replaces missing or corrupted data with substituted values.

\noindent\textbf{Datasets.} We use the {\it Buy} dataset~\cite{fm} and {\it Restaurant} dataset~\cite{fm} for evaluation. The Buy dataset has three attributes: product name, product description, and manufacturer. In this dataset, the manufacturer is masked and considered as the missing attribute to be imputed. The Restaurant dataset has five attributes: restaurant name, address, city, phone number, and food type. In this dataset, the city is masked and considered as the missing attribute to be imputed. For both datasets, when using CodeGen, we randomly select $3$ examples that make at least one initially generated code snippets fail in validation.

\noindent\textbf{Evaluation Metric.} We report the {\it imputation accuracy}, where an imputation is considered correct if it exactly matches the gold value, non-trivially contains the gold value, or is non-trivially contained within the gold value.

\noindent\textbf{Baselines.} The only available baseline for few-shot imputation on the Buy dataset is FMs~\cite{fm}, a pure end-to-end LLM solution with $10$-shot examples. In addition, we compare against the statistical data cleaning method HoloClean~\cite{holoclean} and the supervised method IMP~\cite{imp} which uses a large number of training data.

\begin{table}[htbp]
  \caption{Data Imputation}
  \vspace{-4mm}
  \label{tab:data-imputation}
  \begin{tabularx}{0.95\linewidth}{lcccc}
    \toprule
    \multirow{2}{*}{Method}             & \multicolumn{2}{c}{Buy}   & \multicolumn{2}{c}{Restaurant}\\
                                        & Acc       & LLM R         & Acc           & LLM R         \\
    \midrule
    HoloClean~\cite{holoclean}          & $16.2$    & N / A         & $33.1$        & N / A         \\
    IMP~\cite{imp}                      & $96.5$    & N / A         & $77.2$        & N / A         \\
    FMs~\cite{fm}                       & $98.5$    & $100.0\%$     & $88.4$        & $100.0\%$     \\
    \sys LLM only (Max)                       & $96.9$    & $100.0\%$     & $89.8$        & $100.0\%$     \\
    \sys Opt                               & $91.9$    &  $39.3\%$     & $84.7$        &   $0.0\%$     \\
    \bottomrule
  \end{tabularx}
  \vspace{-2mm}
\end{table}

As indicated in Table~\ref{tab:data-imputation}, \sys achieves an impressive imputation accuracy of $96.1\%$ on the Buy dataset, using only $3$ examples. This performance significantly surpasses the generic data cleaning method HoloClean~\cite{holoclean}. Furthermore, it demonstrates comparable results to the state-of-the-art supervised solution IMP~\cite{imp}, which relies on training with thousands of labeled examples. When compared to FMs~\cite{fm}, which employ the LLM on each row in the table, \sys Opt represents the plan produced by the specialized \sys optimizer, with the accuracy gap $G$ set to 5\%. LLM only is the `\sys Max' plan that the \sys optimizer finds. \sys Opt reduces the number of LLM calls by $60.7\%$ on the Buy dataset, with only a $6.6\%$ accuracy gap compared to the supervised IMP. In this plan, only $39.3\%$ of queries are answered by the LLM, while $55.3\%$ are answered by CodeGen, and $5.5\%$ are answered by ModelGen. 
On the Restaurant dataset, \sys Opt uses only the CodeGen module, without making any LLM calls at all; and its accuracy is only 5\% lower than the LLM only solutions.

For the code generated by \sys, please refer to Appendix~\ref{sec:appendix:code} of the extended version~\cite{seed}.

\subsection{Data Extraction}
\label{sec:exp:ie}
This task extracts structured information from unstructured data~\cite{ie}.

\noindent\textbf{Datasets.} We use the {\it Wiki NBA} dataset~\cite{evaporate}, which contains Wikipedia pages of NBA players. Each page is in a complex HTML format. This task extracts $19$ user-defined attributes for each player, including name, height, college attended, and other relevant facts. When using CodeGen, for each of the $19$ attributes to be extracted, we provided one example to help code generation: since there are empty attributes, we randomly select one example that has a non-empty value on this attribute.

\noindent\textbf{Evaluation Metric.} As in EVAPORATE~\cite{evaporate}, we report {\it token F1}. This is the geometric mean of  the precision and recall, where precision measures the ratio of matched tokens among all extracted tokens and recall measures the ratio of matched tokens among all correct tokens.

\noindent\textbf{Baselines.} We compare against pre-trained machine learning models SimpleDOM, RoBERTa-Base, RoBERTa-Structural, and DOM-LM~\cite{dom}; EVAPORATE-DIRECT~\cite{evaporate}, which is a pure LLM method; and EVAPORATE-CODE+~\cite{evaporate} which is an LLM-based code generation approach. Note the EVAPORATE approaches are specifically designed for data extraction.

\begin{table}[htbp]
  \vspace{-2mm}
  \caption{Data Extraction: Quantitative Results}
  \vspace{-3mm}
  \label{tab:data-extraction}
  \begin{tabularx}{0.7\linewidth}{lc}
    \toprule
    Method                              & Wiki NBA F1 \\
    \midrule
    SimpleDOM~\cite{simple-dom}         & $28.8$    \\
    RoBERTa-Base                        & $34.2$    \\
    RoBERTa-Structural                  & $48.6$    \\
    DOM-LM~\cite{dom}                   & $64.8$    \\
    EVAPORATE-DIRECT~\cite{evaporate}   & $84.6$    \\
    EVAPORATE-CODE+~\cite{evaporate}    & $84.7$    \\
    \sys Opt                               & $88.6$    \\
    \bottomrule
  \end{tabularx}
  \vspace{-5mm}
\end{table}

As shown in Table~\ref{tab:data-extraction}, the plan produced by the \sys optimizer achieves the highest F1 score among all methods, with only $22.5\%$ cases processed by the LLM. The Performance of the trained machine learning models in general is poor, confirming that small models are lack of the semantics understanding ability to accurately extract data.
However, EVAPORATE, which uses LLM to either directly extract data or produce code, shows impressive performance, indicating that using LLMs to synthesize code is effective on this task.
For the code generated by \sys please refer to Appendix~\ref{sec:appendix:code} of the extended version~\cite{seed}.

\subsection{Data Annotation}
\label{sec:exp:cls}
Data annotation classifies data records into different categories.

\noindent\textbf{Datasets.} We use the {\it OS} dataset~\cite{os} for evaluation. This is a binary classification tasks where the goal is to annotate a tweet as either offensive or benign.

\noindent\textbf{Evaluation Metric.} We measure the {\it classification accuracy} as used in EFL~\cite{efl}.

\noindent\textbf{Baselines.} We compare against EFL~\cite{efl} which achieves state-of-the-art few-shot results on the OS dataset amongst non-LLM approaches. We also compare against FT, which is a few-shot fine-tuning of a language model as the basic approach. Both EFL and FT use $8$-shot examples. We also compare with an LLM-only approach.

\begin{table}[htbp]
  \vspace{-2mm}
  \caption{Data Annotation: Quantitative Results}
  \vspace{-4mm}
  \label{tab:topic-classification}
  \begin{tabularx}{0.95\linewidth}{lcrr}
    \toprule
    Method                  & Accuracy          & LLM R     & \#LLM     \\
    \midrule
    FT~\cite{efl}           & $70.0$            & N / A     & N / A     \\
    EFL~\cite{efl}          & $79.8$            & N / A     & N / A     \\
    \sys LLM Only           & $84.2$            & $100.0\%$ & $100.0\%$ \\
    \sys Opt              & $88.3$            &  $31.1\%$ &   $1.0\%$ \\
    \bottomrule
  \end{tabularx}
  \vspace{-2mm}
\end{table}

As shown in Tab.~\ref{tab:topic-classification}, \sys achieves the best accuracy, while only calling the LLM on 31.1\% of the data records. Moreover, due to the query batching optimization, \sys reduces the number of LLM calls and consumed tokens by $99\%$ and $95\%$ respectively, compared to the approaches that perform one individual LLM call on each data record.  
Note that the accuracy is even better than always querying the LLM. This is because the pseudo-annotated examples accumulated in the model module provide more information about the query workload, overcoming the few-shot limitations of querying the LLM, despite the relatively limited capacity of a small model.
\section{Related Work}
\label{sec:related_work}
\begin{sloppypar}

\noindent\textbf{LLMs in Data Curation.}
Most recently, the \dm researchers have begun to exploit LLMs' abilities to solve \dm problems. Specifically, most studies attempt to directly employ LLMs to solve a task by describing the task in a natural language. Zero-shot or few-shot LLMs are then utilized to generate textual responses, which are converted to the expected task output. For instance, in entity resolution, an LLM is asked if two records in a sentence belong to the same entity~\cite{amazon}. Such empirical research~\cite {gpt3,palm,structgpt,fm} has demonstrated the feasibility of applying LLMs to \dm tasks. However, these studies mainly focus on designing appropriate prompts to make LLM effective for a specific task, while we aim to offer a generic tool to automatically synthesize domain-specific \dm solutions.

Directly using LLMs as a universal solution from \dm tasks, despite its convenience, has very high performance overhead as we have shown.
Although some early studies have targeted some of these problems caused by the limited tunability and high computational cost of LLMs, these works, again, only focus on individual \dm tasks. For instance, EVAPORATE~\cite{evaporate} focuses on information extraction; Chameleon~\cite{chameleon} targets question answering; PromptNER~\cite{prompt-ner} focuses on named entity recognition, etc.

\noindent\textbf{Applied, Open Source LLM Projects.}
Apart from academic research, large open-source projects like langchain~\cite{langchain} have emerged to integrate recent advances in large models into a single platform. Nevertheless, similar to other infrastructures like hugginggface~\cite{huggingface}, langchain mainly serves as an implementation platform where the users can leverage the existing techniques to develop LLM-based applications. It does not provide any built-in support to better solve \dm problems. Moreover, the efficiency and scalability issues are largely overlooked.

Individual applied projects are emerging as well, which implement and integrate practical optimizations into LLMs. For example, GPTCache~\cite{gpt-cache} caches LLM QA responses for reuse; LlamaIndex~\cite{llamaindex} develops retrieval-augmented LLMs to overcome the hallucination problem; AgentGPT~\cite{agent-gpt} focuses on customizing LLMs as dialogue agents; Auto-GPT~\cite{autogpt} includes textual memory and internet access to search and gather information; ChatDB~\cite{chatdb} extends LLMs with external symbolic memories, etc. However, each project only focuses on one specific optimization with straightforward implementation, rather than systematically studying and addressing the challenges raised in effectively and efficiently using LLMs to construct \dm solutions.

% \noindent\textbf{Fine-tuning LLMs.}
% Contemporary LLMs are typically fine-tuned through in-context learning~\cite{icl}, a process that involves demonstrating the task to the LLM through a limited number of examples. Despite the extensive research conducted on in-context learning~\cite{icl}, the few-shot setting remains constrained, because LLMs are not effective in utilizing more annotated data. An alternative approach, delta-tuning~\cite{lora,qlora}, allows for task-specific tuning of LLMs without updating most of their parameters. However, tuning a local LLM necessitates a considerable investment in computational resources and ongoing maintenance, thus potentially limiting its application. \sys can work seamlessly with any LLMs, either tuned or not.

\noindent\textbf{LLMs with Tools.}
Since ChatGPT plug-ins~\cite{chatgpt}, a substantial number of studies have investigated the combination of LLMs and tools. For example, HuggingGPT~\cite{hugginggpt} uses LLM to choose huggingface checkpoints; Chameleon~\cite{chameleon} combines LLMs with a set of tools including web search and program execution; StructGPT~\cite{structgpt} uses LLMs to generate parameters for function calling. Inspired by ReAct~\cite{react}, the tools invocation in \sys incorporates a cognitive process to chose actions and provide feedback after each action. Moreover, \sys offers a set of predefined tools specifically designed for \dm tasks and allows users to use the synthesized code and small model modules as new tools.

\end{sloppypar}
\section{Conclusion}
\label{sec:conclusion}
\begin{sloppypar}
In this paper, we introduce \sys, a system that leverages LLMs to synthesize domain-specific \dm solutions, which by fully leveraging LLMs' synthesis, reasoning, semantics understanding abilities as well as the encoded common knowledge, ensure the effectiveness and efficiency of the automatically produced solutions. The results confirm that \sys achieves SOTA performance on various tasks and significantly reduces the number of LLM calls.
\end{sloppypar}

\bibliographystyle{ACM-Reference-Format}
\bibliography{main}

%%% -*-BibTeX-*-
%%% Do NOT edit. File created by BibTeX with style
%%% ACM-Reference-Format-Journals [18-Jan-2012].

\begin{thebibliography}{52}

%%% ====================================================================
%%% NOTE TO THE USER: you can override these defaults by providing
%%% customized versions of any of these macros before the \bibliography
%%% command.  Each of them MUST provide its own final punctuation,
%%% except for \shownote{}, \showDOI{}, and \showURL{}.  The latter two
%%% do not use final punctuation, in order to avoid confusing it with
%%% the Web address.
%%%
%%% To suppress output of a particular field, define its macro to expand
%%% to an empty string, or better, \unskip, like this:
%%%
%%% \newcommand{\showDOI}[1]{\unskip}   % LaTeX syntax
%%%
%%% \def \showDOI #1{\unskip}           % plain TeX syntax
%%%
%%% ====================================================================

\ifx \showCODEN    \undefined \def \showCODEN     #1{\unskip}     \fi
\ifx \showDOI      \undefined \def \showDOI       #1{#1}\fi
\ifx \showISBNx    \undefined \def \showISBNx     #1{\unskip}     \fi
\ifx \showISBNxiii \undefined \def \showISBNxiii  #1{\unskip}     \fi
\ifx \showISSN     \undefined \def \showISSN      #1{\unskip}     \fi
\ifx \showLCCN     \undefined \def \showLCCN      #1{\unskip}     \fi
\ifx \shownote     \undefined \def \shownote      #1{#1}          \fi
\ifx \showarticletitle \undefined \def \showarticletitle #1{#1}   \fi
\ifx \showURL      \undefined \def \showURL       {\relax}        \fi
% The following commands are used for tagged output and should be
% invisible to TeX
\providecommand\bibfield[2]{#2}
\providecommand\bibinfo[2]{#2}
\providecommand\natexlab[1]{#1}
\providecommand\showeprint[2][]{arXiv:#2}

\bibitem[\protect\citeauthoryear{??}{ter}{2023}]%
        {ternary}
 \bibinfo{year}{2023}\natexlab{}.
\newblock \bibinfo{title}{Ternary Search - Algorithms for Competitive Programming}.
\newblock \bibinfo{howpublished}{\url{https://cp-algorithms.com/num_methods/ternary_search.html}}.
\newblock


\bibitem[\protect\citeauthoryear{Anonymous}{Anonymous}{2023}]%
        {seed}
\bibfield{author}{\bibinfo{person}{Anonymous}.} \bibinfo{year}{2023}\natexlab{}.
\newblock \bibinfo{title}{SEED: Domain-Specific Data Curation With Large Language Models}.
\newblock
\newblock
\urldef\tempurl%
\url{https://anonymous.4open.science/r/SEED/paper.pdf}
\showURL{%
\tempurl}


\bibitem[\protect\citeauthoryear{Ashok and Lipton}{Ashok and Lipton}{2023}]%
        {prompt-ner}
\bibfield{author}{\bibinfo{person}{Dhananjay Ashok} {and} \bibinfo{person}{Zachary~C. Lipton}.} \bibinfo{year}{2023}\natexlab{}.
\newblock \showarticletitle{PromptNER: Prompting For Named Entity Recognition}.
\newblock \bibinfo{journal}{\emph{CoRR}}  \bibinfo{volume}{abs/2305.15444} (\bibinfo{year}{2023}).
\newblock
\urldef\tempurl%
\url{https://doi.org/10.48550/ARXIV.2305.15444}
\showDOI{\tempurl}


\bibitem[\protect\citeauthoryear{Chase}{Chase}{2022}]%
        {langchain}
\bibfield{author}{\bibinfo{person}{Harrison Chase}.} \bibinfo{year}{2022}\natexlab{}.
\newblock \bibinfo{title}{{LangChain}}.
\newblock \bibinfo{howpublished}{\url{https://github.com/hwchase17/langchain}}.
\newblock


\bibitem[\protect\citeauthoryear{Chen, Fisch, Weston, and Bordes}{Chen et~al\mbox{.}}{2017}]%
        {drqa}
\bibfield{author}{\bibinfo{person}{Danqi Chen}, \bibinfo{person}{Adam Fisch}, \bibinfo{person}{Jason Weston}, {and} \bibinfo{person}{Antoine Bordes}.} \bibinfo{year}{2017}\natexlab{}.
\newblock \showarticletitle{Reading {Wikipedia} to Answer Open-Domain Questions}. In \bibinfo{booktitle}{\emph{Association for Computational Linguistics (ACL)}}.
\newblock


\bibitem[\protect\citeauthoryear{Chen, wei Chang, Schlinger, Wang, and Cohen}{Chen et~al\mbox{.}}{2021}]%
        {ott}
\bibfield{author}{\bibinfo{person}{Wenhu Chen}, \bibinfo{person}{Ming wei Chang}, \bibinfo{person}{Eva Schlinger}, \bibinfo{person}{William Wang}, {and} \bibinfo{person}{William Cohen}.} \bibinfo{year}{2021}\natexlab{}.
\newblock \showarticletitle{Open Question Answering over Tables and Text}.
\newblock \bibinfo{journal}{\emph{Proceedings of ICLR 2021}} (\bibinfo{year}{2021}).
\newblock


\bibitem[\protect\citeauthoryear{Cowie and Lehnert}{Cowie and Lehnert}{1996}]%
        {ie}
\bibfield{author}{\bibinfo{person}{James~R. Cowie} {and} \bibinfo{person}{Wendy~G. Lehnert}.} \bibinfo{year}{1996}\natexlab{}.
\newblock \showarticletitle{Information Extraction}.
\newblock \bibinfo{journal}{\emph{Commun. {ACM}}} \bibinfo{volume}{39}, \bibinfo{number}{1} (\bibinfo{year}{1996}), \bibinfo{pages}{80--91}.
\newblock


\bibitem[\protect\citeauthoryear{Davidson, Warmsley, Macy, and Weber}{Davidson et~al\mbox{.}}{2017}]%
        {os}
\bibfield{author}{\bibinfo{person}{Thomas Davidson}, \bibinfo{person}{Dana Warmsley}, \bibinfo{person}{Michael~W. Macy}, {and} \bibinfo{person}{Ingmar Weber}.} \bibinfo{year}{2017}\natexlab{}.
\newblock \showarticletitle{Automated Hate Speech Detection and the Problem of Offensive Language}. In \bibinfo{booktitle}{\emph{{ICWSM} 2017}}. \bibinfo{publisher}{{AAAI} Press}, \bibinfo{pages}{512--515}.
\newblock


\bibitem[\protect\citeauthoryear{et~al.}{et~al.}{2022a}]%
        {palm}
\bibfield{author}{\bibinfo{person}{Aakanksha~Chowdhery et al.}} \bibinfo{year}{2022}\natexlab{a}.
\newblock \bibinfo{title}{PaLM: Scaling Language Modeling with Pathways}.
\newblock
\newblock
\showeprint[arxiv]{2204.02311}~[cs.CL]


\bibitem[\protect\citeauthoryear{et~al.}{et~al.}{2023a}]%
        {codet}
\bibfield{author}{\bibinfo{person}{Bei~Chen et al.}} \bibinfo{year}{2023}\natexlab{a}.
\newblock \showarticletitle{CodeT: Code Generation with Generated Tests}. In \bibinfo{booktitle}{\emph{The Eleventh International Conference on Learning Representations, {ICLR} 2023, Kigali, Rwanda, May 1-5, 2023}}. \bibinfo{publisher}{OpenReview.net}.
\newblock
\urldef\tempurl%
\url{https://openreview.net/pdf?id=ktrw68Cmu9c}
\showURL{%
\tempurl}


\bibitem[\protect\citeauthoryear{et~al.}{et~al.}{2023b}]%
        {chatdb}
\bibfield{author}{\bibinfo{person}{Chenxu~Hu et al.}} \bibinfo{year}{2023}\natexlab{b}.
\newblock \showarticletitle{ChatDB: Augmenting LLMs with Databases as Their Symbolic Memory}.
\newblock \bibinfo{journal}{\emph{CoRR}}  \bibinfo{volume}{abs/2306.03901} (\bibinfo{year}{2023}).
\newblock


\bibitem[\protect\citeauthoryear{et~al.}{et~al.}{2020a}]%
        {t5}
\bibfield{author}{\bibinfo{person}{Colin~Raffel et al.}} \bibinfo{year}{2020}\natexlab{a}.
\newblock \showarticletitle{Exploring the Limits of Transfer Learning with a Unified Text-to-Text Transformer}.
\newblock \bibinfo{journal}{\emph{J. Mach. Learn. Res.}}  \bibinfo{volume}{21} (\bibinfo{year}{2020}), \bibinfo{pages}{140:1--140:67}.
\newblock


\bibitem[\protect\citeauthoryear{et~al.}{et~al.}{2024}]%
        {agentcoder}
\bibfield{author}{\bibinfo{person}{Dong~Huang et al.}} \bibinfo{year}{2024}\natexlab{}.
\newblock \bibinfo{title}{AgentCoder: Multi-Agent-based Code Generation with Iterative Testing and Optimisation}.
\newblock
\newblock
\showeprint[arxiv]{2312.13010}~[cs.CL]


\bibitem[\protect\citeauthoryear{et~al.}{et~al.}{2019}]%
        {DBLP:journals/pvldb/RezigCSSTMOTE19}
\bibfield{author}{\bibinfo{person}{El~Kindi~Rezig et al.}} \bibinfo{year}{2019}\natexlab{}.
\newblock \showarticletitle{Data Civilizer 2.0: {A} Holistic Framework for Data Preparation and Analytics}.
\newblock \bibinfo{journal}{\emph{Proc. {VLDB} Endow.}} \bibinfo{volume}{12}, \bibinfo{number}{12} (\bibinfo{year}{2019}), \bibinfo{pages}{1954--1957}.
\newblock


\bibitem[\protect\citeauthoryear{et~al.}{et~al.}{2023c}]%
        {awesome-chatgpt-prompts}
\bibfield{author}{\bibinfo{person}{Fatih Kadir~Akın et al.}} \bibinfo{year}{2023}\natexlab{c}.
\newblock \bibinfo{title}{Awesome ChatGPT Prompts}.
\newblock \bibinfo{howpublished}{https://github.com/f/awesome-chatgpt-prompts}.
\newblock


\bibitem[\protect\citeauthoryear{et~al.}{et~al.}{2022b}]%
        {coderl}
\bibfield{author}{\bibinfo{person}{Hung~Le et al.}} \bibinfo{year}{2022}\natexlab{b}.
\newblock \showarticletitle{CodeRL: Mastering Code Generation through Pretrained Models and Deep Reinforcement Learning}. In \bibinfo{booktitle}{\emph{Advances in Neural Information Processing Systems 35: Annual Conference on Neural Information Processing Systems 2022, NeurIPS 2022, New Orleans, LA, USA, November 28 - December 9, 2022}}.
\newblock
\urldef\tempurl%
\url{http://papers.nips.cc/paper\_files/paper/2022/hash/8636419dea1aa9fbd25fc4248e702da4-Abstract-Conference.html}
\showURL{%
\tempurl}


\bibitem[\protect\citeauthoryear{et~al.}{et~al.}{2023d}]%
        {structgpt}
\bibfield{author}{\bibinfo{person}{Jinhao~Jiang et al.}} \bibinfo{year}{2023}\natexlab{d}.
\newblock \showarticletitle{StructGPT: {A} General Framework for Large Language Model to Reason over Structured Data}.
\newblock \bibinfo{journal}{\emph{CoRR}}  \bibinfo{volume}{abs/2305.09645} (\bibinfo{year}{2023}).
\newblock


\bibitem[\protect\citeauthoryear{et~al.}{et~al.}{2020b}]%
        {bart}
\bibfield{author}{\bibinfo{person}{Mike~Lewis et al.}} \bibinfo{year}{2020}\natexlab{b}.
\newblock \showarticletitle{{BART:} Denoising Sequence-to-Sequence Pre-training for Natural Language Generation, Translation, and Comprehension}. In \bibinfo{booktitle}{\emph{Proceedings of the 58th Annual Meeting of the Association for Computational Linguistics, {ACL} 2020, Online, July 5-10, 2020}}, \bibfield{editor}{\bibinfo{person}{Dan Jurafsky}, \bibinfo{person}{Joyce Chai}, \bibinfo{person}{Natalie Schluter}, {and} \bibinfo{person}{Joel~R. Tetreault}} (Eds.). \bibinfo{publisher}{Association for Computational Linguistics}, \bibinfo{pages}{7871--7880}.
\newblock


\bibitem[\protect\citeauthoryear{et~al.}{et~al.}{2013}]%
        {data-tamer}
\bibfield{author}{\bibinfo{person}{Michael~Stonebraker et al.}} \bibinfo{year}{2013}\natexlab{}.
\newblock \showarticletitle{Data Curation at Scale: The Data Tamer System}. In \bibinfo{booktitle}{\emph{{CIDR}}}.
\newblock


\bibitem[\protect\citeauthoryear{et~al.}{et~al.}{2023e}]%
        {chameleon}
\bibfield{author}{\bibinfo{person}{Pan~Lu et al.}} \bibinfo{year}{2023}\natexlab{e}.
\newblock \showarticletitle{Chameleon: Plug-and-Play Compositional Reasoning with Large Language Models}.
\newblock \bibinfo{journal}{\emph{CoRR}}  \bibinfo{volume}{abs/2304.09842} (\bibinfo{year}{2023}).
\newblock


\bibitem[\protect\citeauthoryear{et~al.}{et~al.}{2023f}]%
        {evaporate}
\bibfield{author}{\bibinfo{person}{Simran~Arora et al.}} \bibinfo{year}{2023}\natexlab{f}.
\newblock \showarticletitle{Language Models Enable Simple Systems for Generating Structured Views of Heterogeneous Data Lakes}.
\newblock \bibinfo{journal}{\emph{Proc. {VLDB} Endow.}} \bibinfo{volume}{17}, \bibinfo{number}{2} (\bibinfo{year}{2023}), \bibinfo{pages}{92--105}.
\newblock
\urldef\tempurl%
\url{https://www.vldb.org/pvldb/vol17/p92-arora.pdf}
\showURL{%
\tempurl}


\bibitem[\protect\citeauthoryear{et~al.}{et~al.}{2023g}]%
        {metagpt}
\bibfield{author}{\bibinfo{person}{Sirui~Hong et al.}} \bibinfo{year}{2023}\natexlab{g}.
\newblock \bibinfo{title}{MetaGPT: Meta Programming for A Multi-Agent Collaborative Framework}.
\newblock
\newblock
\showeprint[arxiv]{2308.00352}~[cs.AI]


\bibitem[\protect\citeauthoryear{et~al.}{et~al.}{2021a}]%
        {llmlabel}
\bibfield{author}{\bibinfo{person}{Shuohang~Wang et al.}} \bibinfo{year}{2021}\natexlab{a}.
\newblock \showarticletitle{Want To Reduce Labeling Cost? {GPT-3} Can Help}. In \bibinfo{booktitle}{\emph{Findings of the Association for Computational Linguistics: {EMNLP} 2021, Virtual Event / Punta Cana, Dominican Republic, 16-20 November, 2021}}, \bibfield{editor}{\bibinfo{person}{Marie{-}Francine Moens}, \bibinfo{person}{Xuanjing Huang}, \bibinfo{person}{Lucia Specia}, {and} \bibinfo{person}{Scott~Wen{-}tau Yih}} (Eds.). \bibinfo{publisher}{Association for Computational Linguistics}, \bibinfo{pages}{4195--4205}.
\newblock
\urldef\tempurl%
\url{https://doi.org/10.18653/V1/2021.FINDINGS-EMNLP.354}
\showDOI{\tempurl}


\bibitem[\protect\citeauthoryear{et~al.}{et~al.}{2023h}]%
        {recode}
\bibfield{author}{\bibinfo{person}{Shiqi~Wang et al.}} \bibinfo{year}{2023}\natexlab{h}.
\newblock \showarticletitle{ReCode: Robustness Evaluation of Code Generation Models}. In \bibinfo{booktitle}{\emph{Proceedings of the 61st Annual Meeting of the Association for Computational Linguistics (Volume 1: Long Papers), {ACL} 2023, Toronto, Canada, July 9-14, 2023}}. \bibinfo{publisher}{Association for Computational Linguistics}, \bibinfo{pages}{13818--13843}.
\newblock
\urldef\tempurl%
\url{https://doi.org/10.18653/V1/2023.ACL-LONG.773}
\showDOI{\tempurl}


\bibitem[\protect\citeauthoryear{et~al.}{et~al.}{2023i}]%
        {react}
\bibfield{author}{\bibinfo{person}{Shunyu~Yao et al.}} \bibinfo{year}{2023}\natexlab{i}.
\newblock \bibinfo{title}{ReAct: Synergizing Reasoning and Acting in Language Models}.
\newblock
\newblock


\bibitem[\protect\citeauthoryear{et~al.}{et~al.}{2020c}]%
        {gpt3}
\bibfield{author}{\bibinfo{person}{Tom B.~Brown et al.}} \bibinfo{year}{2020}\natexlab{c}.
\newblock \bibinfo{title}{Language Models are Few-Shot Learners}.
\newblock
\newblock


\bibitem[\protect\citeauthoryear{et~al.}{et~al.}{2017}]%
        {holoclean}
\bibfield{author}{\bibinfo{person}{Theodoros~Rekatsinas et al.}} \bibinfo{year}{2017}\natexlab{}.
\newblock \showarticletitle{HoloClean: Holistic Data Repairs with Probabilistic Inference}.
\newblock \bibinfo{journal}{\emph{{VLDB}}} (\bibinfo{year}{2017}).
\newblock


\bibitem[\protect\citeauthoryear{et~al.}{et~al.}{2020d}]%
        {huggingface}
\bibfield{author}{\bibinfo{person}{Thomas~Wolf et al.}} \bibinfo{year}{2020}\natexlab{d}.
\newblock \showarticletitle{Transformers: State-of-the-Art Natural Language Processing}. In \bibinfo{booktitle}{\emph{Proceedings of the 2020 Conference on Empirical Methods in Natural Language Processing: System Demonstrations}}. \bibinfo{publisher}{Association for Computational Linguistics}, \bibinfo{address}{Online}.
\newblock


\bibitem[\protect\citeauthoryear{et~al.}{et~al.}{2018}]%
        {spider}
\bibfield{author}{\bibinfo{person}{Tao~Yu et al.}} \bibinfo{year}{2018}\natexlab{}.
\newblock \showarticletitle{Spider: {A} Large-Scale Human-Labeled Dataset for Complex and Cross-Domain Semantic Parsing and Text-to-SQL Task}. In \bibinfo{booktitle}{\emph{Proceedings of the 2018 Conference on Empirical Methods in Natural Language Processing, Brussels, Belgium, October 31 - November 4, 2018}}, \bibfield{editor}{\bibinfo{person}{Ellen~Riloff et~al.}} (Ed.). \bibinfo{publisher}{Association for Computational Linguistics}, \bibinfo{pages}{3911--3921}.
\newblock
\urldef\tempurl%
\url{https://doi.org/10.18653/V1/D18-1425}
\showDOI{\tempurl}


\bibitem[\protect\citeauthoryear{et~al.}{et~al.}{2022c}]%
        {dom}
\bibfield{author}{\bibinfo{person}{Xiang~Deng et al.}} \bibinfo{year}{2022}\natexlab{c}.
\newblock \showarticletitle{{DOM-LM:} Learning Generalizable Representations for {HTML} Documents}.
\newblock \bibinfo{journal}{\emph{CoRR}}  \bibinfo{volume}{abs/2201.10608} (\bibinfo{year}{2022}).
\newblock
\urldef\tempurl%
\url{https://arxiv.org/abs/2201.10608}
\showURL{%
\tempurl}


\bibitem[\protect\citeauthoryear{et~al.}{et~al.}{2020e}]%
        {ditto}
\bibfield{author}{\bibinfo{person}{Yuliang~Li et al.}} \bibinfo{year}{2020}\natexlab{e}.
\newblock \showarticletitle{Deep Entity Matching with Pre-Trained Language Models}.
\newblock \bibinfo{journal}{\emph{Proc. {VLDB} Endow.}} \bibinfo{volume}{14}, \bibinfo{number}{1} (\bibinfo{year}{2020}), \bibinfo{pages}{50--60}.
\newblock
\urldef\tempurl%
\url{https://doi.org/10.14778/3421424.3421431}
\showDOI{\tempurl}


\bibitem[\protect\citeauthoryear{et~al.}{et~al.}{2021b}]%
        {imp}
\bibfield{author}{\bibinfo{person}{Yinan~Mei et al.}} \bibinfo{year}{2021}\natexlab{b}.
\newblock \showarticletitle{Capturing Semantics for Imputation with Pre-trained Language Models}. In \bibinfo{booktitle}{\emph{{ICDE}}}.
\newblock


\bibitem[\protect\citeauthoryear{et~al.}{et~al.}{2023j}]%
        {hugginggpt}
\bibfield{author}{\bibinfo{person}{Yongliang~Shen et al.}} \bibinfo{year}{2023}\natexlab{j}.
\newblock \showarticletitle{HuggingGPT: Solving {AI} Tasks with ChatGPT and its Friends in HuggingFace}.
\newblock \bibinfo{journal}{\emph{CoRR}}  \bibinfo{volume}{abs/2303.17580} (\bibinfo{year}{2023}).
\newblock
\urldef\tempurl%
\url{https://doi.org/10.48550/ARXIV.2303.17580}
\showDOI{\tempurl}


\bibitem[\protect\citeauthoryear{et~al.}{et~al.}{2021c}]%
        {simple-dom}
\bibfield{author}{\bibinfo{person}{Yichao~Zhou et al.}} \bibinfo{year}{2021}\natexlab{c}.
\newblock \showarticletitle{Simplified {DOM} Trees for Transferable Attribute Extraction from the Web}.
\newblock \bibinfo{journal}{\emph{CoRR}}  \bibinfo{volume}{abs/2101.02415} (\bibinfo{year}{2021}).
\newblock
\urldef\tempurl%
\url{https://arxiv.org/abs/2101.02415}
\showURL{%
\tempurl}


\bibitem[\protect\citeauthoryear{et~al.}{et~al.}{2022d}]%
        {glm}
\bibfield{author}{\bibinfo{person}{Zhengxiao~Du et al.}} \bibinfo{year}{2022}\natexlab{d}.
\newblock \showarticletitle{{GLM:} General Language Model Pretraining with Autoregressive Blank Infilling}. In \bibinfo{booktitle}{\emph{Proceedings of the 60th Annual Meeting of the Association for Computational Linguistics (Volume 1: Long Papers), {ACL} 2022, Dublin, Ireland, May 22-27, 2022}}, \bibfield{editor}{\bibinfo{person}{Smaranda Muresan}, \bibinfo{person}{Preslav Nakov}, {and} \bibinfo{person}{Aline Villavicencio}} (Eds.). \bibinfo{publisher}{Association for Computational Linguistics}, \bibinfo{pages}{320--335}.
\newblock


\bibitem[\protect\citeauthoryear{et~al.}{et~al.}{2023k}]%
        {DBLP:journals/corr/abs-2306-08891}
\bibfield{author}{\bibinfo{person}{Zihui~Gu et al.}} \bibinfo{year}{2023}\natexlab{k}.
\newblock \showarticletitle{Interleaving Pre-Trained Language Models and Large Language Models for Zero-Shot {NL2SQL} Generation}.
\newblock \bibinfo{journal}{\emph{CoRR}}  \bibinfo{volume}{abs/2306.08891} (\bibinfo{year}{2023}).
\newblock
\urldef\tempurl%
\url{https://doi.org/10.48550/ARXIV.2306.08891}
\showDOI{\tempurl}


\bibitem[\protect\citeauthoryear{Freitas and Curry}{Freitas and Curry}{2016}]%
        {bigdc}
\bibfield{author}{\bibinfo{person}{Andr{\'e} Freitas} {and} \bibinfo{person}{Edward Curry}.} \bibinfo{year}{2016}\natexlab{}.
\newblock \showarticletitle{Big data curation}.
\newblock  (\bibinfo{year}{2016}).
\newblock


\bibitem[\protect\citeauthoryear{K{\"{o}}pcke, Thor, and Rahm}{K{\"{o}}pcke et~al\mbox{.}}{2010a}]%
        {ee-benchmark}
\bibfield{author}{\bibinfo{person}{Hanna K{\"{o}}pcke}, \bibinfo{person}{Andreas Thor}, {and} \bibinfo{person}{Erhard Rahm}.} \bibinfo{year}{2010}\natexlab{a}.
\newblock \showarticletitle{Evaluation of entity resolution approaches on real-world match problems}.
\newblock \bibinfo{journal}{\emph{Proc. {VLDB} Endow.}} \bibinfo{volume}{3}, \bibinfo{number}{1} (\bibinfo{year}{2010}), \bibinfo{pages}{484--493}.
\newblock
\urldef\tempurl%
\url{https://doi.org/10.14778/1920841.1920904}
\showDOI{\tempurl}


\bibitem[\protect\citeauthoryear{K{\"{o}}pcke, Thor, and Rahm}{K{\"{o}}pcke et~al\mbox{.}}{2010b}]%
        {amazon}
\bibfield{author}{\bibinfo{person}{Hanna K{\"{o}}pcke}, \bibinfo{person}{Andreas Thor}, {and} \bibinfo{person}{Erhard Rahm}.} \bibinfo{year}{2010}\natexlab{b}.
\newblock \showarticletitle{Evaluation of entity resolution approaches on real-world match problems}.
\newblock \bibinfo{journal}{\emph{Proc. {VLDB} Endow.}} \bibinfo{volume}{3}, \bibinfo{number}{1} (\bibinfo{year}{2010}), \bibinfo{pages}{484--493}.
\newblock


\bibitem[\protect\citeauthoryear{Lee, Polozov, and Richardson}{Lee et~al\mbox{.}}{2021}]%
        {kaggledbqa}
\bibfield{author}{\bibinfo{person}{Chia-Hsuan Lee}, \bibinfo{person}{Oleksandr Polozov}, {and} \bibinfo{person}{Matthew Richardson}.} \bibinfo{year}{2021}\natexlab{}.
\newblock \showarticletitle{{KaggleDBQA}: Realistic Evaluation of Text-to-{SQL} Parsers}. In \bibinfo{booktitle}{\emph{Proceedings of the 59th Annual Meeting of the Association for Computational Linguistics and the 11th International Joint Conference on Natural Language Processing (Volume 1: Long Papers)}}. \bibinfo{publisher}{Association for Computational Linguistics}, \bibinfo{address}{Online}, \bibinfo{pages}{2261--2273}.
\newblock
\urldef\tempurl%
\url{https://aclanthology.org/2021.acl-long.176}
\showURL{%
\tempurl}


\bibitem[\protect\citeauthoryear{Liu}{Liu}{2023}]%
        {gpt-cache}
\bibfield{author}{\bibinfo{person}{Frank Liu}.} \bibinfo{year}{2023}\natexlab{}.
\newblock \bibinfo{title}{GPTCache : A Library for Creating Semantic Cache for LLM Queries}.
\newblock \bibinfo{howpublished}{https://github.com/zilliztech/GPTCache}.
\newblock


\bibitem[\protect\citeauthoryear{Liu}{Liu}{2022}]%
        {llamaindex}
\bibfield{author}{\bibinfo{person}{Jerry Liu}.} \bibinfo{year}{2022}\natexlab{}.
\newblock \bibinfo{booktitle}{\emph{{LlamaIndex}}}.
\newblock
\urldef\tempurl%
\url{https://doi.org/10.5281/zenodo.1234}
\showDOI{\tempurl}


\bibitem[\protect\citeauthoryear{Narayan, Chami, Orr, and R{\'{e}}}{Narayan et~al\mbox{.}}{2022}]%
        {fm}
\bibfield{author}{\bibinfo{person}{Avanika Narayan}, \bibinfo{person}{Ines Chami}, \bibinfo{person}{Laurel~J. Orr}, {and} \bibinfo{person}{Christopher R{\'{e}}}.} \bibinfo{year}{2022}\natexlab{}.
\newblock \showarticletitle{Can Foundation Models Wrangle Your Data?}
\newblock \bibinfo{journal}{\emph{Proc. {VLDB} Endow.}} \bibinfo{volume}{16}, \bibinfo{number}{4} (\bibinfo{year}{2022}), \bibinfo{pages}{738--746}.
\newblock
\urldef\tempurl%
\url{https://doi.org/10.14778/3574245.3574258}
\showDOI{\tempurl}


\bibitem[\protect\citeauthoryear{OpenAI}{OpenAI}{2023a}]%
        {chatgpt}
\bibfield{author}{\bibinfo{person}{OpenAI}.} \bibinfo{year}{2023}\natexlab{a}.
\newblock \bibinfo{title}{ChatGPT — Release Notes}.
\newblock \bibinfo{howpublished}{https://help.openai.com/en/articles/6825453-chatgpt-release-notes}.
\newblock


\bibitem[\protect\citeauthoryear{OpenAI}{OpenAI}{2023b}]%
        {gpt4}
\bibfield{author}{\bibinfo{person}{OpenAI}.} \bibinfo{year}{2023}\natexlab{b}.
\newblock \bibinfo{title}{GPT-4 Technical Report}.
\newblock
\newblock


\bibitem[\protect\citeauthoryear{Peeters and Bizer}{Peeters and Bizer}{2024}]%
        {matchgpt}
\bibfield{author}{\bibinfo{person}{Ralph Peeters} {and} \bibinfo{person}{Christian Bizer}.} \bibinfo{year}{2024}\natexlab{}.
\newblock \bibinfo{title}{Entity Matching using Large Language Models}.
\newblock
\newblock
\showeprint[arxiv]{2310.11244}~[cs.CL]


\bibitem[\protect\citeauthoryear{reworkd}{reworkd}{2023}]%
        {agent-gpt}
\bibfield{author}{\bibinfo{person}{reworkd}.} \bibinfo{year}{2023}\natexlab{}.
\newblock \bibinfo{title}{AgentGPT}.
\newblock \bibinfo{howpublished}{https://github.com/reworkd/AgentGPT}.
\newblock


\bibitem[\protect\citeauthoryear{Robertson and Zaragoza}{Robertson and Zaragoza}{2009}]%
        {INR-019}
\bibfield{author}{\bibinfo{person}{Stephen Robertson} {and} \bibinfo{person}{Hugo Zaragoza}.} \bibinfo{year}{2009}\natexlab{}.
\newblock \showarticletitle{The Probabilistic Relevance Framework: BM25 and Beyond}.
\newblock \bibinfo{journal}{\emph{Foundations and Trends® in Information Retrieval}} \bibinfo{volume}{3}, \bibinfo{number}{4} (\bibinfo{year}{2009}), \bibinfo{pages}{333--389}.
\newblock
\showISSN{1554-0669}
\urldef\tempurl%
\url{https://doi.org/10.1561/1500000019}
\showDOI{\tempurl}


\bibitem[\protect\citeauthoryear{Rubin}{Rubin}{1987}]%
        {imputation}
\bibfield{author}{\bibinfo{person}{D.~B. Rubin}.} \bibinfo{year}{1987}\natexlab{}.
\newblock \bibinfo{booktitle}{\emph{Multiple Imputation for Nonresponse in Surveys}}.
\newblock \bibinfo{publisher}{Wiley}. 258 pages.
\newblock


\bibitem[\protect\citeauthoryear{Significant-Gravitas}{Significant-Gravitas}{2023}]%
        {autogpt}
\bibfield{author}{\bibinfo{person}{Significant-Gravitas}.} \bibinfo{year}{2023}\natexlab{}.
\newblock \bibinfo{title}{Auto-GPT: An Autonomous GPT-4 Experiment}.
\newblock \bibinfo{howpublished}{\url{https://github.com/Significant-Gravitas/Auto-GPT}}.
\newblock


\bibitem[\protect\citeauthoryear{Sutskever, Vinyals, and Le}{Sutskever et~al\mbox{.}}{2014}]%
        {seq2seq}
\bibfield{author}{\bibinfo{person}{Ilya Sutskever}, \bibinfo{person}{Oriol Vinyals}, {and} \bibinfo{person}{Quoc~V. Le}.} \bibinfo{year}{2014}\natexlab{}.
\newblock \showarticletitle{Sequence to Sequence Learning with Neural Networks}. In \bibinfo{booktitle}{\emph{Advances in Neural Information Processing Systems 27: Annual Conference on Neural Information Processing Systems 2014, December 8-13 2014, Montreal, Quebec, Canada}}, \bibfield{editor}{\bibinfo{person}{Zoubin~Ghahramani et~al.}} (Ed.). \bibinfo{pages}{3104--3112}.
\newblock


\bibitem[\protect\citeauthoryear{Wang, Fang, Khabsa, Mao, and Ma}{Wang et~al\mbox{.}}{2021}]%
        {efl}
\bibfield{author}{\bibinfo{person}{Sinong Wang}, \bibinfo{person}{Han Fang}, \bibinfo{person}{Madian Khabsa}, \bibinfo{person}{Hanzi Mao}, {and} \bibinfo{person}{Hao Ma}.} \bibinfo{year}{2021}\natexlab{}.
\newblock \showarticletitle{Entailment as Few-Shot Learner}.
\newblock \bibinfo{journal}{\emph{CoRR}}  \bibinfo{volume}{abs/2104.14690} (\bibinfo{year}{2021}).
\newblock


\end{thebibliography}
\appendix
\begin{sloppypar}
    
\section{Task Configuration}
\label{sec:appendix:config}
\noindent\textbf{Data Imputation.}
For the Buy dataset, the API is defined as {\it{``data\_imputation(name,desc) -> manufacturer''}}, with description {\it{``name: str. The name of the product.''}} and {\it{``desc: str. The description of the product. Might be empty (string 'nan').''}}. The task is described as {\it{``Given a product's information online, please deduce its manufacturer.''}}. For the Restaurant dataset, the API is defined as {\it{``data\_imputation(name,desc) -> manufacturer''}}, with description {\it{``name: str. The name of the restaurant.''}}, {\it{``addr: str. The address of the restaurant.''}}, {\it{``phone: str. The phone number of the restaurant.''}}, {\it{``type: str. The food type of the restaurant.''}}, and {\it{``city: str. The city the restaurant is in.''}}. The task is described as {\it{``Given a restaurant's information, deduce the city it is located in.''}}. 

\noindent\textbf{Data Extraction.}
For this data extraction task on {\it Wiki NBA} data set~\cite{evaporate}, as $19$ different attributes are required to be extracted, which subject to different rules, we generate $19$ different programs using the attribute placeholder {\it{``<ATTR>''}}. The API is defined as {\it{``information\_extraction\_<ATTR>(html) -> <ATTR>''}}, with description {\it{``html: str. The HTML string.''}} and {\it{``str. The <ATTR> information extracted.''}}. The task is described as {\it{``Given the HTML string of the wikipedia page of an NBA basketball player, extract the player's <ATTR> information.''}}. 

\noindent\textbf{Data Annotation.}
For this task, the API is defined as {\it{``topic\_classification(text) -> topic''}}, with description {\it{``text: str. The tweet message.''}} and {\it{``topic: int. Output 0 for hatespeech, 1 for benign.''}}. The task is described as {\it{``Given a tweet message, determine whether it is hatespeech or benign.''}}.

\noindent\textbf{Entity Resolution.}
For the entity resolution task, both data sets ({\it Amazon-Google} and {\it Abt-Buy}~\cite{ee-benchmark}) share the same API definition {\it{``entity\_resoluion(entity1, entity2) -> is\_same''}} and task description {\it{``Given two products, determine whether they are the same product.''}}, however with different descriptions. The entity description for Amazon-Google is {\it{``entity1: dict. It contains three attributes: `title`, `manufacturer`, `price`. `title` and `manufacturer` are strings, `price` is float.''}}, while the entity description for Abt-Buy is {\it{``entity1: dict. It contains three attributes: `name`, `description`, `price`. `name` and `description` are strings, `price` is float.''}}.

\noindent\textbf{Data Discovery.}
For this task, on all data sets ({\it Spider}~\cite{spider} and {\it OTT-QA}~\cite{ott}, and {\it KaggleDBQA})~\cite{kaggledbqa} datasets.
All datasets share the same task of table retrieval. The difference is that multiple tables in the Spider dataset are related to the query, while OTT-QA has only one ground truth table for each query. For the Spider dataset, the API is defined as {\it{``data\_discovery(query) -> tables\_list''}}, with description {\it{``query: str. The natural language query.''}} and {\it{``tables\_list: List[str]. A list of table names that are found related to the given query.''}}. The task is described as {\it{``Given a natural language query, find tables that can help answer the query.''}}. For the OTT-QA dataset, instead of {\it{``tables\_list''}}, only a single {\it{``table''}} is asked for. In the current infrastructure, we provided five tools: \texttt{GET\_SCHEMA(table\_name)} which gets the schema of a specified table, \texttt{SEARCH\_KEYWORDS(keywords)} which finds the top-20 tables related to a list of keywords using exact string match, \texttt{SEARCH\_VALUE(value)} which finds the top-20 tables related to a value using a fixed $0.2$ Levenshtein distance filter, \texttt{JOINT\_SEARCH(keywords, value)} which retrieves both, and \texttt{BM25(query)} which finds the top-20 tables sorted by BM25.

\section{Experiments}
\label{sec:appendix:ablation}

\subsection{Data Discovery}
\label{sec:exp:dd}
In this experiment, we use a {\it table retrieval} task to evaluate \sys's performance using LLM module with tools integration. As a specific data discovery task, table retrieval finds data that is relevant to a given natural language query.

\noindent\textbf{Datasets.} We use the {\it Spider} dataset~\cite{spider}, the {\it OTT-QA} dataset~\cite{ott}, and the {\it KaggleDBQA} dataset~\cite{kaggledbqa}. Because Spider and KaggleDBQA are NL2SQL benchmarks, we treat one {\it{``question''}} as the input query and the tables involved in the corresponding ground-truth SQL query as the related tables. For the OTT-QA dataset, we use its specified table retrieval task. We use only one manual example query to demonstrate to the LLM how to use the tools. Apart from $\texttt{SUBMIT}$, the available tools include $\texttt{GET\_SCHEMA}$, $\texttt{BM25}$, $\texttt{SEARCH\_KEYWORDS}$, $\texttt{SEARCH\_VALUE}$, and $\texttt{JOINT\_SEARCH}$.

\noindent\textbf{Evaluation Metric.} We measure the {\it F1 score} which reflects both precision and recall, where precision represents the ratio of ground truth tables within discovered tables, and recall measures the ratio of discovered tables within ground truth tables.

\noindent\textbf{Baselines.} We compare \sys with BM25~\cite{INR-019} (a classic information retrieval method) as well as DrQA~\cite{drqa}, an embedding-based method. We consider the top-$3$ results returned by the BM25 and DrQA method as predictions on the Spider and KaggleDBQA dataset. For the OTT-QA dataset, since there is only one ground truth table for each query, we only consider the top-$1$ result returned by the BM25 and DrQA method. Additionally, we compare our solution with a baseline approach that is LLM only, but not augmented with our tool integration. In this case, we present to the LLM the top-$20$ candidates returned by BM25 by concatenating their titles and schema. This is because each time LLM can only ingest at most 20 candidates due to the context length constraint, while iteratively feeding all tables to the LLM is not feasible due to cost issues. 

\begin{table}[htbp]
  \vspace{-2mm}
  \caption{Data Discovery: Quantitative Results}
  \vspace{-4mm}
  \label{tab:discovery}
  \begin{tabularx}{0.85\linewidth}{lccc}
    \toprule
    Method                      & Spider        & OTT-QA    & KGDBQA    \\
    \midrule
    BM25                        & $34.8$        & $62.0$    & $24.1$    \\
    DrQA                        & $35.5$        & $50.0$    & $34.7$    \\
    \sys LLM only               & $50.7$        & $71.0$    & $56.1$    \\
    \sys Opt                    & $85.7$        & $67.8$    & $63.5$    \\
    \bottomrule
  \end{tabularx}
  \vspace{-3mm}
\end{table}

As show in Table~\ref{tab:discovery}, \sys LLM only represents using the LLM module without tools integration, while Opt adds tools. With the assistance of the tools, \sys LLM module surpasses BM25 and DrQA on all datasets and significantly outperforms the solution that only uses the LLM on Spider and KGDBQA. However, \sys is less effective on the OTT-QA dataset. This is because the titles of the tables in this dataset are typically meaningless, thus often misleading \sys when selecting the tools.

\subsection{Ablation Study}
We mainly conducted the ablation study on data annotation on the OS dataset, to investigate the impact of different query batching sizes and strategies, as well as different distance thresholds for the CacheReuse module.

\noindent\textbf{Query Batching.} We tested how the {\it batching} optimization for the LLM module performs. Here \sys LLM Only refers to directly querying GPT-4 with the few-shot prompt produced by \sys's prompt template.

\begin{table}[htbp]
  \vspace{-2mm}
  \caption{Data Annotation: Varying Batch Sizes}
  \vspace{-4mm}
  \label{tab:topic-classification-batching}
  \begin{tabularx}{0.75\linewidth}{lcrr}
    \toprule
    Method                          & OS        & $\#$LLM   & $\#$Tokens    \\
    \midrule
    \sys LLM Only                   & $84.2$    & $512$     & $\sim 443K$   \\
    \sys (Batch$^{B=4}$)            & $83.4$    & $128$     & $\sim 170K$   \\
    \sys (Batch$^{B=8}$)            & $85.4$    &  $64$     & $\sim 113K$   \\
    \sys (Batch$^{B=16}$)           & $89.3$    &  $32$     & $\sim  85K$   \\
    \sys (Batch$^{B=32}$)           & $90.6$    &  $16$     & $\sim  70K$   \\
    \bottomrule
  \end{tabularx}
  \vspace{-2mm}
\end{table}

\textbf{Varying Batch Sizes.} Herein, $B$ denotes the number of queries within a single batch. We increase the batch size until we reach the maximal context length of an LLM. As shown in Table~\ref{tab:topic-classification-batching}, batching significantly reduces both the number of required LLM queries and the total number of tokens submitted to the LLM. The larger the batch size, the more the saving. Moreover, larger batch size tends to lead to better accuracy. This is because a larger batch size provides more contextual information for the LLM to make better decisions. Therefore, our batching optimization is a win-win.

\begin{table}[htbp]
  \vspace{-2mm}
  \caption{Ablation Study on Batching Strategy.}
  \vspace{-4mm}
  \label{tab:topic-classification-batching-strategy}
  \begin{tabularx}{0.48\linewidth}{lcr}
    \toprule
    Method                             & OS        \\
    \midrule
    \sys (Batch$^{B=32}$)              & $90.6$    \\
    \sys (Batch$^{B=32,\texttt{DIV}}$) & $92.4$    \\
    \sys (Batch$^{B=32,\texttt{PRX}}$) & $91.6$    \\
    \sys (Batch$^{B=32,\texttt{CLS}}$) & $89.6$    \\
    \sys (Batch$^{B=32,\texttt{FAR}}$) & $90.8$    \\
    \bottomrule
  \end{tabularx}
  \vspace{-2mm}
\end{table}

\textbf{Different Batching Strategies.} In \sys, the default batching strategy adheres to the distribution of the dataset by randomly sampling queries to form a batch. Other batching strategies including $\texttt{DIV}$, $\texttt{PRX}$, $\texttt{CLS}$, $\texttt{FAR}$ are discussed in Sec.~\ref{sec:infra-batching}. As shown in Tab.~\ref{tab:topic-classification-batching-strategy}, different batching strategies lead to slightly different accuracies. Because the performance of the default random sampling strategy is already satisfactory, we recommend using the random sampling to save users' effort in selecting the best strategy. 
    
\end{sloppypar}

\textbf{Frozen Model with Different Distance Thresholds.} The OS dataset is a relatively easy dataset, which enables a large distance threshold for the CacheReuse. In The extreme case, when using a relatively large $d$ threshold, \sys reduces the ratio of LLM calls on OS by $95.5\%$, with an accuracy comparable to or even better than always querying GPT-4.

\begin{table}[htbp]
  \vspace{-2mm}
  \caption{Data Annotation: Varying Threshold $d$}
  \vspace{-4mm}
  \label{tab:topic-classification-ablation}
  \begin{tabularx}{0.85\linewidth}{lcrr}
    \toprule
    Method                              & OS        & $k$ ($\#$LLM)  & LLM R        \\
    \midrule
    \sys (LLM only)                     & $84.2$    & $512$          & $100.0\%$    \\
    \sys (CacheReuse$^{d=0.4}$)         & $86.1$    & $368$          &  $71.9\%$    \\
    \sys (CacheReuse$^{d=0.6}$)         & $88.3$    & $159$          &  $31.1\%$    \\
    \sys (CacheReuse$^{d=0.8}$)         & $87.9$    &  $64$          &  $12.5\%$    \\
    \sys (CacheReuse$^{d=1.0}$)         & $86.7$    &  $23$          &   $4.5\%$    \\
    \bottomrule
  \end{tabularx}
  \vspace{-2mm}
\end{table}

\section{Prompts}
\label{sec:appendix:prompts}

As the prompts in \sys are dynamically composed, we use $\texttt{<>}$ to represent placeholders for simplicity.

\subsection{Prompts Related to Direct LLM Querying}

\noindent\textbf{Task Profile.} The task profile contains the description of the task, inputs, outputs, and examples. This is the basic building block and will be repeatedly used in other prompts as a component. We will refer to it as $\texttt{<task\_profile>}$ in the future. We demonstrate the task profile with an entity resolution example.
\begin{lstlisting}
Given two products, determine whether they are the same product.
The input contains the following attributes:
- entity1: dict. It contains three attributes: `title`, `manufacturer`, `price`. `title` and `manufacturer` are strings, `price` is float.
- entity2: dict. Same as entity1.
You are expected to output:
- int. Output 0 if the two product are not identical, 1 of the two products are identical.
Examples:
Example #0:
Inputs:
- entity1: {'title': "mia 's math adventure : just in time", 'manufacturer': 'kutoka', 'price': 19.99}
- entity2: {'title': "kutoka interactive 61208 mia 's math adventure : just in time !", 'manufacturer': 'kutoka interactive', 'price': 24.99}
Output:
- 1
Example #1:
Inputs:
- entity1: {'title': 'adobe creative suite cs3 design standard upgrade [ mac ]', 'manufacturer': 'adobe', 'price': 399.0}
- entity2: {'title': '29300183dm adobe creative suite 3 design standard media tlp tlp nonprofit download -', 'manufacturer': nan, 'price': 20.97}
Output:
- 0
Example #2:
Inputs:
- entity1: {'title': 'instant immersion 33 languages', 'manufacturer': 'topics entertainment', 'price': 49.99}
- entity2: {'title': 'instant immersion 33 languages', 'manufacturer': nan, 'price': 47.36}
Output:
- 1
\end{lstlisting}

\noindent\textbf{LLM Query Prompt.} The LLM query prompt simply reuses the task profile and adds an instance. We demonstrate the task profile with a specific entity resolution instance.
\begin{lstlisting}
<task_profile>
Now consider the following instance:
- entity1: {'title': 'high school advantage 2008', 'manufacturer': 'encore', 'price': 39.99}
- entity2: {'title': 'elementary advantage 2008 encore', 'manufacturer': 'nan', 'price': 39.95}
Please respond with the answer only. Please do not output any other responses or any explanations.
\end{lstlisting}

\subsection{Prompts Related to Code Generation}

\noindent\textbf{Tools Profile.} This describes the tools which can either be used in code generation (e.g., using a third-party python package) or in the LLM module as interactive tools. Each tool is in the format of $\texttt{- <tools\_api>: <tools\_desc>}$. For example, in data discovery:
\begin{lstlisting}
- SUBMIT(table): Input a string. Submit the table.
- GET_SCHEMA(table_name): Input a string. Return the schema (the name of all the columns) of the given table.
- SEARCH_KEYWORDS(keywords): Input a list of keywords. Return a list containing at most 20 tables whose title or schema strictly contains at least one given keyword. The tables that contain more keywords will be ranked higher.
- SEARCH_VALUE(value): Input a value in any type. Return a list containing at most 20 tables that contains this value (the value is fuzzy matched as a string). The tables that contain more of this value will be ranked higher.
- BM25(query): Input a string. Return a list containing at most 20 tables that are related to the query, found by running bm25 algorithm over table title and schema.
- JOINT_SEARCH(keywords, value): Input a list of keywords and a value. Return a list containing at most 20 tables that contains both the value and at least one of the keywords.
\end{lstlisting}
When the tools profile is used in code generation a default profile allows the code to use any python package:
\begin{lstlisting}
- python packages: You can use any python packages you want. You do not need to install but only import them before using. You can not use supervised-learning method as there is no training data. Though, you can use frozen models if you want.
\end{lstlisting}
We will refer to the tools profile as $\texttt{<tools\_profile>}$ in the future.

\noindent\textbf{Advice Generation.}
\begin{lstlisting}
Please help me with the following Python programming task:
<task_profile>
<tools_profile>
Notice that the evaluation will severely punish incorrect outputs. Thus, when the function is uncertain, please return `None` to abstain instead of returning an incorrect guess.
The generated function should be robust, instead of only passing the provided examples. You are allowed use global variables to keep track of new incoming data.
Please provide a brief advice on how I should complete this programming task. Provide 2-3 concise sentences summarizing the key coding strategy.
\end{lstlisting}
We will refer to the advice generated as $\texttt{<advice>}$ in the future.

\noindent\textbf{Code Generation.}
\begin{lstlisting}
Please write a Python function `<task_api>` that completes the following goal:
<task_profile>
<tools_profile>
Hint: <advice>
Notice that the evaluation will severely punish incorrect outputs. Thus, when the function is uncertain, please return `None` to abstain instead of returning an incorrect guess.
The generated function should be robust, instead of only passing the provided examples. You are allowed use global variables to keep track of new incoming data.
Please respond with the Python implementation of `<task_api>` only. Please do not output any other responses or any explanations.
Your response should be in the following format (the markdown format string should be included):
```python
def <task_api>:
    '''Your Implementation Here.'''
```
\end{lstlisting}

\noindent\textbf{Logical Correctness Evaluation.}
\begin{lstlisting}
Consider the following task:
<task_profile>
Consider the following Python function `<task_api>` that is expected to complete the above task:
```python
<code>
```
Please determine whether the function `<task_api>` is correct.
Please output your judgement in the following format: first output "Thought:" and then thoughts on whether this function is correct or not, then output "Answer:" followed by a single answer "yes" or "no".
\end{lstlisting}

\noindent\textbf{Example (Test Case) Generation.} Each test case is viewed as an assertion statement in Python. We demonstrate the example generation using data imputation on the Buy dataset:
\begin{lstlisting}
Consider the following task:
<task_profile>
Consider the following Python function `data_imputation(name, desc)` that is expected to complete the above task:
```python
<code>
```
Please provide a 3 ~ 5 test cases that you think are effective and comprehensive for determining whether the program is correct.
You should focus on the logical correctness of the program.
Do not design corner cases such as small floating point errors, values beyond natural ranges, behaviors around boundary, etc.
The provided test cases should be multiple lines of compilable code, each line should be in the same format as below:
```python
assert data_imputation(name, desc) == ?
```
Example test cases:
```python
assert data_imputation("Linksys EtherFast EZXS55W Ethernet Switch", "5 x 10/100Base-TX LAN") == "Linksys"
assert data_imputation("PlayStation 2 Memory Card 8MB", "nan") == "Sony"
assert data_imputation("Directed Electronics SCH1 SiriusConnect Home Tuner", "SIRIUS SCH1 SIRIUSConnect Home Tuner") == "Sirius"
```
Now, please design some new test cases. Please respond with the test cases only.
\end{lstlisting}

Some generated test cases:
\begin{lstlisting}
assert data_imputation("Apple iPhone 13 Pro Max", "A15 Bionic chip with Neural Engine") == "Apple"
assert data_imputation("Samsung Galaxy Watch 4", "nan") == "Samsung"
assert data_imputation("Canon EOS R5 Mirrorless Camera", "45 Megapixel Full-Frame CMOS Sensor") == "Canon"
assert data_imputation("Bose QuietComfort 35 II Wireless Headphones", "nan") == "Bose"
assert data_imputation("LG 55-inch 4K OLED Smart TV", "Dolby Vision and Dolby Atmos") == "LG"
\end{lstlisting}

\noindent\textbf{Code Fixing Decision.}
\begin{lstlisting}
Consider the following task:
<task_profile>
Consider the following Python function `data_imputation(name, desc)` that is expected to complete the above task:
```python
<code>
```
The function seems to be incorrect:
<error_info>
Please determine whether: A. the function is actually incorrect and can be fixed. B. the function is actually incorrect and can not be easily fixed. C. the function is correct while the case evaluation is incorrect.
Please output your judgement in the following format: first output "Thought:" and then thoughts on whether this function is correct or not, then output "Answer:" followed by a single answer "A", "B" or "C".
\end{lstlisting}

\noindent\textbf{Code Fixing Advice Generation.}
\begin{lstlisting}
Consider the following task:
<task_profile>
Consider the following Python function `data_imputation(name, desc)` that is expected to complete the above task:
```python
<code>
```
The function seems to be incorrect:
<error_info>
Please provide a brief advice on how I should fix the function. Provide 2-3 concise sentences summarizing the key coding strategy.
The fix should be robust and general, instead of only passing the provided error cases.
\end{lstlisting}

\noindent\textbf{Code Fixing.}
\begin{lstlisting}
Consider the following task:
<task_profile>
Consider the following Python function `data_imputation(name, desc)` that is expected to complete the above task:
```python
<code>
```
The function seems to be incorrect:
<error_info>
Hint: <advice>
Please fix the code. The fix should be robust and general, instead of only passing the provided error cases.
Please respond with the fixed Python implementation of `<task_api>` only. Please do not output any other responses or any explanations.
Your response should be in the following format (the markdown format string should be included):
```python
def <task_api>:
    '''Your Implementation Here.'''
```
\end{lstlisting}

\subsection{Prompts Related to LLM Query}

\noindent\textbf{Batched Query Prompt.} The batched query prompt modifies the LLM query prompt by batching multiple instances. For simplicity, we demonstrate the batched query prompt with the entity resolution instance from the LLM query prompt and omit the other instances.
\begin{lstlisting}
<task_profile>
Now consider the following instances:
Instance #1:
- entity1: {'title': 'high school advantage 2008', 'manufacturer': 'encore', 'price': 39.99}
- entity2: {'title': 'elementary advantage 2008 encore', 'manufacturer': 'nan', 'price': 39.95}
Instance #2:
...
Please respond with the answer only, one line for each instance. Please do not output any other responses or any explanations.
Each response should start with "Output #<index>: ". For example:
Output #1: ...
Output #2: ...
\end{lstlisting}

\noindent\textbf{LLM with Tools Integration.} As the invocation process is iterative, we first show the initial prompt:
\begin{lstlisting}
<task_profile>
Please do not directly answer the problem. This should be an interactive process. You are allowed to take one of the following actions:
<tools_profile>
Interaction Examples:
<interaction_examples>
Now consider the following instance:
Your respond should strictly follow this format: first output `Thought:` followed by your thought process, then output `Action:` followed by one of the actions mentioned above.
\end{lstlisting}

Then we demonstrate the interaction process with a data discovery instance on Spider:
\begin{lstlisting}
...
Thought: I need to find tables related to the query, try with keywords 'employees', 'job', and 'past'.
Action: SEARCH_KEYWORDS(['employees', 'job', 'past'])
Observation: The top results: hr_1-employees, store_1-employees, phone_market-market, network_2-Person, hr_1-jobs, hr_1-job_history, department_store-Staff_Department_Assignments, department_management-department, customer_deliveries-Employees, cre_Doc_Tracking_DB-Employees, cre_Doc_Control_Systems-Employees, college_1-EMPLOYEE, yelp_user, yelp_tip, yelp_review, yelp_neighbourhood, yelp_checkin, yelp_category, yelp_business, wta_1-rankings

Thought: It seems `hr_1-employees`, `hr_1-jobs`, and `hr_1-job_history` are related to the query. I should confirm them by checking their schema.
Action: GET_SCHEMA('hr_1-employees')
Observation: The table has columns: `EMPLOYEE_ID`, `FIRST_NAME`, `LAST_NAME`, `EMAIL`, `PHONE_NUMBER`, `HIRE_DATE`, `JOB_ID`, `SALARY`, `COMMISSION_PCT`, `MANAGER_ID`, `DEPARTMENT_ID`

Thought: This table seems to be related to the query as it contains information about employees and their jobs. Now I need to check the schema of the other tables.
Action: GET_SCHEMA('hr_1-jobs')
Observation: The table has columns: `JOB_ID`, `JOB_TITLE`, `MIN_SALARY`, `MAX_SALARY`

Thought: This table seems to be related to the query as it contains information about jobs. Now I need to check the schema of the last table.
Action: GET_SCHEMA('hr_1-job_history')
Observation: The table has columns: `EMPLOYEE_ID`, `START_DATE`, `END_DATE`, `JOB_ID`, `DEPARTMENT_ID`

Thought: This table seems to be related to the query as it contains information about the job history of employees. It is time to submit the tables related to the query.
Action: SUBMIT(['hr_1-employees', 'hr_1-jobs', 'hr_1-job_history'])
\end{lstlisting}

\section{Generated Code}
\label{sec:appendix:code}

\subsection{Examples for Data Imputation}

\noindent\textbf{Example \#1}: a code snippet generated for the Buy dataset with LLM-generated advice.
\begin{lstlisting}[language=Python]
# Advice: Use a Named Entity Recognition (NER) model. NER is a subtask of information extraction that seeks to locate and classify named entities mentioned in unstructured text into pre-defined categories such as person names, organizations, locations, medical codes, time expressions, quantities, monetary values, percentages, etc. In this case, you can train or use a pre-trained NER model to recognize manufacturer names in the product name and description. This method requires a lot of computational resources and may not be feasible if you don't have a large amount of training data.
import spacy
nlp = spacy.load("en_core_web_sm")
def data_imputation(name, desc):
    doc = nlp(name)
    for ent in doc.ents:
        if ent.label_ == "ORG":
            return ent.text
    if desc != 'nan':
        doc = nlp(desc)
        for ent in doc.ents:
            if ent.label_ == "ORG":
                return ent.text
    return None
\end{lstlisting}

\noindent\textbf{Example \#2}: a code snippet generated for the Buy dataset with an alternate advice also generated by LLM.
\begin{lstlisting}[language=Python]
# Advice: Use a pre-defined dictionary of known manufacturers and their products. This method requires a lot of manual work to create and maintain the dictionary, but it can be very accurate. For example, you can define a dictionary where the keys are product names or keywords (like "PlayStation") and the values are the corresponding manufacturers (like "Sony"). Then, you can search for these keywords in the product name and description to deduce the manufacturer.
def data_imputation(name, desc):
    manufacturers = {
        "Linksys": ["EtherFast", "EZXS55W", "Ethernet Switch"],
        "Sony": ["PlayStation"],
        "Sirius": ["SiriusConnect", "Home Tuner"]
    }
    for manufacturer, keywords in manufacturers.items():
        for keyword in keywords:
            if keyword in name or (desc != 'nan' and keyword in desc):
                return manufacturer
    return None
\end{lstlisting}

\noindent\textbf{Example \#3}: a code snippet branched from failure cases. More manufacturers are included in the list.
\begin{lstlisting}[language=Python]
def data_imputation(name, desc):
    import re
    # List of common manufacturers
    manufacturers = ['Linksys', 'Sony', 'Sirius', 'Apple', 'Samsung', 'Microsoft', 'Dell', 'HP', 'Lenovo', 'Asus', 'Acer', 'LG', 'Panasonic', 'Philips', 'Canon', 'Nikon', 'Bose', 'JBL', 'Logitech', 'Netgear', 'Seagate', 'Western Digital', 'Kingston', 'SanDisk', 'Corsair', 'Intel', 'AMD', 'Nvidia', 'Qualcomm', 'Broadcom']
    # Check if manufacturer is in the name
    for manufacturer in manufacturers:
        if manufacturer.lower() in name.lower():
            return manufacturer
    # Check if manufacturer is in the description
    if desc != 'nan':
        for manufacturer in manufacturers:
            if manufacturer.lower() in desc.lower():
                return manufacturer
    # If manufacturer is not found, return None
    return None
\end{lstlisting}

\noindent\textbf{Example \#4}: a code snippet generated for the Restaurant dataset with LLM-generated advice and improved upon failure cases.
\begin{lstlisting}[language=Python]
# Use Geocoding APIs: You can use a geocoding API, such as Google's Geocoding API, to convert the restaurant's address into geographic coordinates, and then extract the city name from the returned data. This approach is more complex and requires internet access, but it is likely to be the most accurate, especially if the restaurant's name and phone number do not contain the city name.
import re
import phonenumbers
import usaddress
from geopy.geocoders import Nominatim
def data_imputation(name, addr, phone, type):
    # Extract city from restaurant name
    city_in_name = re.search(r'\((.*?)\)', name)
    if city_in_name:
        return city_in_name.group(1).lower()
    # Extract city from phone number
    try:
        parsed_phone = phonenumbers.parse(phone, 'US')
        if parsed_phone.country_code == 1 and parsed_phone.national_number:
            city_in_phone = phonenumbers.geocoder.description_for_number(parsed_phone, 'en')
            if city_in_phone:
                return city_in_phone.lower()
    except:
        pass
    # Extract city from address using Geocoding API
    geolocator = Nominatim(user_agent="geoapiExercises")
    location = geolocator.geocode(addr)
    if location:
        address_components = usaddress.parse(location.address)
        for component in address_components:
            if 'PlaceName' in component:
                return component[0].lower()
    return None
\end{lstlisting}

\subsection{Examples for Information Extraction}

\noindent\textbf{Example \#5}: a code snippet generated for the Wiki NBA dataset {\it{without}} LLM-generated advice.
\begin{lstlisting}[language=Python]
from bs4 import BeautifulSoup
import re
def information_extraction_college(html):
    soup = BeautifulSoup(html, 'html.parser')
    for table in soup.find_all('table', {'class': 'infobox'}):
        for row in table.find_all('tr'):
            th = row.find('th')
            if th and 'college' in th.text.lower():
                td = row.find('td')
                if td:
                    college = re.sub(r'\[[^]]*\]', '', td.text) # Remove any reference tags
                    return college.strip()
    return None
\end{lstlisting}

\noindent\textbf{Example \#6}: a code snippet generated for the Wiki NBA dataset with LLM-generated advice. The main difference is that the class containing the relevant information is not ``infobox'' but ``infobox vcard'', which is addressed in the advice.
\begin{lstlisting}[language=Python]
# You can use BeautifulSoup, a Python library for parsing HTML and XML documents. It creates parse trees that are helpful to extract the data easily. You can search for the player's college information by finding the table with class 'infobox vcard' which usually contains the player's information on Wikipedia. Then, iterate over the rows of the table until you find a row with the header 'College'. The next cell in that row should contain the college information. If no such row is found, return None.
from bs4 import BeautifulSoup
def information_extraction_college(html):
    soup = BeautifulSoup(html, 'html.parser')
    table = soup.find('table', {'class': 'infobox vcard'})
    if table:
        for row in table.find_all('tr'):
            if row.find('th') and 'College' in row.find('th').text:
                return row.find('td').text.split('(')[0].strip()
    return None
\end{lstlisting}
\end{document}